\theoremstyle{plain}
\newtheorem{thm}{Theorem}[section]
\newtheorem{lem}[thm]{Lemma}
\newtheorem{prop}[thm]{Proposition}
\theoremstyle{definition}
\newtheorem{defn}{Definition}[section]
\theoremstyle{remark}
\newtheorem{rem}{Remark}[section]
\DeclareMathOperator*{\esssup}{ess\,sup}
\author{
{\sc Olivier Bardou}\thanks{Corresponding author. Gaz de France,
Research and Development Division, 361 Avenue du Pr\'esident Wilson
- B.P. 33, 93211 Saint-Denis La Plaine cedex. E-mail: {\tt
olivier-aj.bardou@gazdefrance.com}, Phone: {\tt+33 1 49 22 54 58},
Fax: {\tt +33 1 49 22 57 10}} \quad{\sc Sandrine
Bouthemy}\thanks{Gaz de France, Research and Development Division,
361 Avenue du Pr\'esident Wilson - B.P. 33, 93211 Saint-Denis La
Plaine cedex. E-mail: {\tt sandrine.bouthemy@gazdefrance.com}} \quad
{\sc and} \quad {\sc Gilles Pag\`es}
\thanks{Laboratoire de Probabilit\'es et Mod\`eles al\'eatoires,
UMR~7599, Universit\'e Paris 6, case 188, 4, pl. Jussieu, F-75252
Paris Cedex 5, France. E-mail:{\tt  gpa@ccr.jussieu.fr}}
}
\date{6th April 2007}
\title{Optimal quantization for the pricing of swing options}
\begin{document}
%\doublespacing
\maketitle

%\nocite{*}
\begin{abstract}
In this paper, we investigate a numerical algorithm for the pricing
of swing options, relying on the so-called optimal quantization
method.

The numerical procedure  is described in details and numerous
simulations are provided to assert its efficiency. In particular, we
carry out a comparison with the Longstaff-Schwartz algorithm.
\end{abstract}

\noindent {\em Key words: Swing options, stochastic control, optimal
quantization, energy.}

%\bigskip
%\ni {\em 2000 Mathematics Subject Classification:.}
%
\noindent
\section*{Introduction}
In increasingly deregulated energy markets, swing options arise as
powerful tools for modeling supply contracts~\cite{geman-05}. In
such an agreement between a buyer and a seller, the buyer always has
to pay some amount even if the service or product is not delivered.
Therefore, the buyer has to manage his contract by constantly
swinging for one state to the other, requiring delivery or not. This
is the kind of agreement that usually links an energy producer to a
trader. Numerous other examples of energy contracts can be modeled
as swing options. From storages \cite{swing,carmona-ludkovski-06} to
electricity supply \cite{keppo-02,carmona-ludkovski-07}, this kind
of financial device is now widely used. And it has to be noticed
that its field of application has recently been extended to the IT
domain \cite{clearwater-huberman-06}.

 Nevertheless, the pricing of swings
remains a real challenge. Closely related to a multiple stopping
problem \cite{carmona-touzi-04,carmona-dayanik-06}, swing options
require the use of high level numerical schemes. Moreover, the high
dimensionality of the underlying price processes and the various
constraints to be integrated in the model of contracts based on
physical assets such as storages or gas fired power plants increase
the difficulty of the problem.

Thus, the most recent technics of mathematical finance have been
applied in this context; from trees to Least Squares Monte Carlo
based methodology \cite{THO,jaillet-ronn-al-04,lari-simchi-01},
finite elements \cite{winter-wilhelm-06} and duality approximation
\cite{MEHA}. But none of these algorithms gives a totally satisfying
solution to the valuation and sensitivity analysis of swing
contracts.

The aim of this paper is then to introduce and study a recent
pricing method that seems very well suited to the question. Optimal
Vector Quantization has yet been  successfully applied to the
valuation of multi-asset American Options \cite{spa,bernouilli,mf}.
It turns out  that this numerical technique is also very efficient
in taking into account the physical constraints of swing contracts.
For sake of simplicity we shall focus on gas supply contracts. After
a brief presentation of such agreements and some background on
Optimal Quantization methods \cite{handbook}, we show that a careful
examination of the properties of the underlying price process can
dramatically improve the efficiency of the procedure, as illustrated
by several numerical examples.

The paper is organized as follows: in the first section, we describe
in details the technical features of the supply contracts (with firm
or penalized constraints) with an emphasis on the features of
interest in view of a numerical  implementation: canonical
decomposition and normal form, backward dynamic programming of the
resulting stochastic control problem, existence of bang-bang
strategies for some appropriate sets of   local and global purchased
volume constraints. Section~2 is devoted to some background on
optimal vector quantization. In Section~3, our algorithm is briefly
analyzed and the {\em a priori} error bound established in the
companion paper~\cite{swingquantif} is stated (as well as the
resulting convergence result of the quantized premium toward the
true one). In Section~4, numerous simulations are carried out and
the quantization method is extensively compared to the well-known
least squares regression algorithm ``\`a la Longstaff-Schwartz". An
annex explains in details how the price processes we consider in
this paper can be quantized in the most efficient way.

\section{Introduction to swing options}
\subsection{Description of the contract}
A typical example of swing option is an energy (usually gas or
electricity) supply contract with optional clauses on price and
volume. The owner of such a contract is allowed to purchase some
amount of energy $q_ {t_k}$ at time $t_k,\,k=0,\dots,n-1$ until the
contract maturity $t_n=T$, usually one year. The purchase price
$K_{k}$ called strike price may be constant or indexed to past
values of crude oil. Throughout the paper we will consider that the
strike prices are constant and equal to $K$ over the term of the
contract. The volume of gas $q_{t_k}$ purchased at time $t_i$ is
subject to the local constraint
$$
q_{min} \leq q_{t_k} \leq q_{max}.
$$
The cumulative volume purchased prior to time $t_k$($i.e.$ up to $t_{k-1}$)  is defined by
$Q_{t_k} = \sum_{\ell=0}^{k-1} q_{t_\ell}$. It must satisfy the following
global constraint (at maturity):
$$
Q_T =\sum_{k=0}^{n-1} q_{t_k}\in [Q_{min},Q_{max}].
$$

Two approaches can be considered:

\smallskip
--  The constraints on the global purchased volumes are firm.

\smallskip
-- A penalty is applied if the constraints are not satisfied.

\medskip
The price at time $t$ of the forward contract delivered at time $T$
is denoted by $F_{t,T}$, $(F_{0,t_k})_{0 \leq k \leq n}$ being a
deterministic process (the future prices at time $0$) available and tradable on the market.

Let $(S_{t_k})_{0 \leq k \leq n}$ be the underlying Markov price
process defined on a probability space
$(\Omega,\mathcal{A},\mathbb{P})$. Note that it  can
be the observation at time $t_k,\,k=0,\dots,n$ of a continuous time
process. Ideally $S_t$ should be the spot price process of the gas
$i.e.$ $S_t = F_{t,t}$. However it does not correspond to a tradable
instrument which leads to consider in practice the day-ahead
contract $F_{t,t+1}$.

% We assume that $(S_{t_i})_i$ is either a Markov process or a
% function of a Markov underlying stricture process $(X_i)_{0 \leq i
% \leq n}$, $i.e$ $S_{t_i}=f_i(X_i)$. In the next three sections we
% will work as if $(S_{t_i})$ is Markov. Otherwise, everything works
% the same way once $(S_{t_i})$ is replaced by $(X_i)$.

We consider its (augmented) natural filtration
$\mathcal{F}^S = (\mathcal{F}_{t_k}^S)_{0 \leq k \leq
n}$. The decision sequence $(q_{t_k})_{0 \leq k \leq n-1}$ is
defined on $(\Omega,\mathcal{A},\mathbb{P})$ as well and is
$\mathcal{F}^S$-adapted, $i.e$ $q_{t_k}$ is
$\mathcal{F}_{t_k}=\sigma(S_{t_0},\dots,S_{t_k})$ measurable,
$k=0,\dots,n$. At time
$t_k$ the owner of the contract gets $q_{t_k}(S_{t_k}-K)$.\\

\begin{rem}
The results of this paper can also be applied to every physical
asset or contract where the owner reward for a decision $q_{t_k}$ is
a function $\psi(t_k,q_{t_k},S_{t_k})$. In the case of supply
contracts, $\psi(t_k,q_{t_k},S_{t_k})=q_{t_k}(S_{t_k}-K)$. As for a
storage, $q_{t_k}$ represents the amount of gas the owner of the
contract decides to inject or withdraw and the profit at each date
is then
\begin{equation*}
\psi(t_k,q_{t_k},S_{t_k})=\left\{
  \begin{array}{llll}
    -q_{t_k}(S_{t_k} + c_I) &if & q_{t_k}\geq 0 & (Injection)\\
    -q_{t_k}(S_{t_k} - c_W) & if & q_{t_k}\leq 0 & (Withdrawal)\\
   0& if &q_{t_k}=0& (Same \  level \ in \  the \ storage)
  \end{array}
\right.
\end{equation*}
where $c_I$ (resp. $c_W$) denotes the injection (resp. withdrawal)
cost \cite{swing}.
\end{rem}

\subsubsection{Case with penalties}
We first consider that the penalties are applied at time $T$ if the
terminal constraint is violated. For a given consumption strategy
$(q_{t_k})_{0\leq k < n}$, the price is given by at time $0$
$$
P(0,S_0, 0)=\mathbb{E}\left(\sum_{k=0}^{n-1}e^{-rt_k}q_{t_k}(S_{t_k} - K) + e^{-rT}P_{_T}(S_T,Q_T)|{\cal F}_0\right)
$$
where $r$ is the interest rate. The function $(x,Q)\mapsto
P_{_T}(x,Q)$ is the penalization: $P_{_T} (x,Q)\leq 0$ and  $P_{_T}
(x,Q)|$ represents the sum that the buyer has to pay if global
purchased volume constraints, say $Q_{\min}$ and $Q_{\max}$, are
violated. \cite{swing} have already investigated this kind of
contract.

Then for every non negative $\mathcal{F}_{t_{k-1}}$ measurable
random variable $Q_{t_k}$ (representing  the cumulated purchased volume up to $t_{k-1}$), the price of the contract at time
$t_k,k=0,\dots,n-1$, is given by
\begin{equation}
P(t_k,S_{t_k},Q_{t_k}) = \esssup_{(q_{t_\ell})_{k\leq \ell <
n}}\mathbb{E}\left(\sum_{\ell=k}^{n-1}e^{-r(t_\ell-t_k)}q_{t_\ell}(S_{t_\ell} -
K) + e^{-r(T-t_k)}P_{_T}(S_T,Q_T)|S_{t_k}\right). \label{Eq:Prix}
\end{equation}

%The conditional expectations are taken under the risk-neutral probability
%measure $\mathbb{P}$ which makes $(e^{-rt_k}S_{t_k})_{k=0,\dots,n}$ a
%martingale.
The standard penalization function is as follows:
\begin{equation}\label{penal}
P_{_T}(x,Q) = -\left(A\,x\,(Q - Q_{min})_- + B\,x\,(Q - Q_{max})_+\right)
\end{equation}
where $A$ and $B$ are large enough --~often equal~-- positive real constants.

\subsubsection{Case with firm constraints}
If we consider that constraints cannot be violated, then for every
non negative $\mathcal{F}_{t_{k-1}}$ measurable random variable
$Q_{t_k}$ defined on $(\Omega,\mathcal{A},\mathbb{P})$, the price of
the contract at time $t_k,k=0,\dots,n-1$ is given by:
\begin{equation}
P(t_k,S_{t_k},Q_{t_k})= \hskip -0.5 cm  \esssup_{(q_{t_\ell})_{k\leq \ell \leq n-1} \in
\mathcal{A}^{Q_{min},Q_{max}}_{k,Q_{t_k}}}\hskip -0.25 cm
\mathbb{E}\hskip -0.15 cm\left(\sum_{\ell=k}^{n-1}e^{-r(t_\ell-t_k)}q_{t_\ell}(S_{t_\ell}-K)|S_{t_k}\right).
\end{equation}
where
$$
\mathcal{A}^{Q_{min},Q_{max}}_{k,Q} = \left\{ (q_{t_\ell})_{k\leq
\ell \leq
n-1},\,q_{t_\ell}:(\Omega,\mathcal{F}_{t_\ell},\mathbb{P})\mapsto
[q_{min},q_{max}],\,  \sum_{\ell=k}^{n-1} q_{t_\ell} \!\in
[\left(Q_{min}-Q\right)_+, Q_{max}-Q]\right\}.
$$
At time $0$, we have:
$$
P(0,S_0,0)=\sup_{(q_{t_k})_{0\leq k \leq n-1} \in
\mathcal{A}^{Q_{min},Q_{max}}_{0,0}}\mathbb{E}\left(\sum_{k=0}^{n-1}e^{-rt_k}q_{t_k}(S_{t_k}-K)\right).
$$
Note that this corresponds to the limit case of the contract with
penalized constraints when $A=B=+\infty$. Furthermore, one shows
that when the penalties $A,B\to +\infty$  in~(\ref{penal}), the
``penalized" price converge to the ``firm" price. This has been
confirmed by extensive numerical implementations of both methods. In
practice when $A,\,B\approx 10\,000$ both methods become
indistinguishable for usual values of the volume constraints.

\subsection{Canonical decomposition  and  normalized contract}
\label{Section:DecompSwing} In  this section we obtain a
decomposition of the payoff of our swing contract (with firm
constraints) into two parts, one having a closed form expression. It
turns out that this simple decomposition leads to an impressive
increase of the precision of the price computation. It plays the
role of a variance reducer. Moreover, its straightforward financial
interpretation leads to a better understanding of the swing
contract.

In fact, we can distinguish a swap part and a normalized swing part:
\begin{eqnarray}\label{decompswing}
P(0,S_0)&=&
\underbrace{\mathbb{E}\left(\sum_{k=0}^{n-1}q_{min}e^{-rt_k}(S_{t_k}-K)\right)}_{Swap}\\
\nonumber &&+ \left(q_{max}-q_{min}\right) \underbrace{\sup_{(q_{t_k}) \in
\mathcal{A}_{[0,1]}^{\tilde{Q}_{min},\tilde{Q}_{max}}(0,0)}\mathbb{E}\left(\sum_{k=0}^{n-1}e^{-rt_k}q_{t_k}
(S_{t_k}-K)\right)}_{Normalized
\ Contract}
\end{eqnarray}
where
$$
\mathcal{A}_{[0,1]}^{\tilde{Q}_{min},\tilde{Q}_{max}}(k,Q) = \{
(q_{t_\ell})_{k\leq \ell \leq n-1},
q_{t_\ell}:(\Omega,\mathcal{F}_{t_\ell},\mathbb{P})\mapsto [0,1],
\sum_{\ell=k}^{n-1} q_{t_\ell} \in
[\left(\tilde{Q}_{min}-Q\right)_+,\tilde{Q}_{max}-Q]\}
$$
and
\begin{equation}
\tilde{Q}_{min} = \frac{(Q_{min}-nq_{min})_+}{q_{max}-q_{min}},\qquad
\tilde{Q}_{max}=\frac{(Q_{max}-nq_{min})_+}{q_{max}-q_{min}}.
\end{equation}

The price models investigated in the following sections define the
spot price as a process centered around the forward curve, and so
$\mathbb{E}(S_t) = F_{0,t}$ is known for every $t\!\in [0,T]$. Thus,
the swap part has a closed form given by
$$
Swap_0= q_{min}\sum_{i=0}^{n-1}e^{-rt_k}(F_{0,t_k}-K).
$$

The adaptation to contracts with penalized constraints is straightforward and amounts to modifying the penalization function in an appropriate way.

\subsection{Dynamic programming equation}
\label{Section:PgmDyn}
%In this section, we consider the case with a
%penalization function.\\
In \cite{swing}, it is shown that, in the penalized problem,
optimal consumption is the solution of a dynamic programming equation.
\begin{prop}Assume that for some positive constants p and C, the
following inequality holds for any $x>0$, and $Q \!\in
[n\,q_{min},n\,q_{max}]$:
$$
|P_{_T}(x,Q)| \leq C(1+x^p).
$$
 Then, there exists an optimal Markovian
consumption $q^*(t_k,S_{t_k},Q_{t_k})$ given by the maximum argument
in the following dynamic programming equation:
\begin{equation}\label{Eq:PgmDyn}
\left\{
  \begin{array}{l}
    P(t_k,S_{t_k},Q_{t_k})=\displaystyle \max_{q \in [q_{min},q_{max}]}\hskip -0.5 cm \left\{q(S_{t_k}-K)
    + e^{-r(t_{k+1}-t_k)}
\mathbb{E}(P(t_{k+1},S_{t_{k+1}},Q_{t_k}+q)|S_{t_k})\right\}, \\
    P(T,S_T,Q_T)= P_{_T}(S_T,Q_T).
  \end{array}
\right.
\end{equation}
\end{prop}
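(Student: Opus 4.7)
The plan is to proceed by backward induction on $k$ from $k=n$ down to $k=0$, combining the Bellman (tower) principle with the Markov property of $(S_{t_k})_{0\le k\le n}$. At $k=n$, the essential supremum in (\ref{Eq:Prix}) is empty and we simply have $P(T,S_T,Q_T)=P_{_T}(S_T,Q_T)$, which is a measurable function of $(S_T,Q_T)$. The polynomial growth assumption $|P_{_T}(x,Q)|\le C(1+x^p)$, combined with the implicit integrability of $S_T^p$ (needed for the problem to make sense), guarantees that every term appearing in (\ref{Eq:Prix}) is integrable.

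For the inductive step, assume the claim holds at step $k+1$, so that $P(t_{k+1},\cdot,\cdot)$ is a measurable function realizing the essential supremum in (\ref{Eq:Prix}) at time $t_{k+1}$. I would then split off the first summand and factor $e^{-r(t_{k+1}-t_k)}$:
\begin{align*}
\sum_{\ell=k}^{n-1}e^{-r(t_\ell-t_k)}q_{t_\ell}(S_{t_\ell}-K)+e^{-r(T-t_k)}P_{_T}(S_T,Q_T)
&= q_{t_k}(S_{t_k}-K) \\
&\quad +\,e^{-r(t_{k+1}-t_k)}\Bigl[\textstyle\sum_{\ell=k+1}^{n-1}e^{-r(t_\ell-t_{k+1})}q_{t_\ell}(S_{t_\ell}-K)+e^{-r(T-t_{k+1})}P_{_T}(S_T,Q_T)\Bigr].
\end{align*}
For a fixed $\mathcal{F}_{t_k}$-measurable $q_{t_k}\!\in[q_{min},q_{max}]$, the tail strategy $(q_{t_\ell})_{k+1\le\ell<n}$ is $\mathcal F^S$-adapted from time $t_{k+1}$ with initial cumulated volume $Q_{t_{k+1}}=Q_{t_k}+q_{t_k}$. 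Conditioning on $\mathcal{F}_{t_{k+1}}$, applying the induction hypothesis to the bracket, and then invoking the Markov property of $S$ (which allows one to replace $\mathbb{E}(\,\cdot\,|\mathcal{F}_{t_k})$ by $\mathbb{E}(\,\cdot\,|S_{t_k})$ for functionals of $S_{t_{k+1}}$), the right-hand side of (\ref{Eq:Prix}) reduces, before taking the $\esssup$ over $q_{t_k}$, to
\[
q_{t_k}(S_{t_k}-K)+e^{-r(t_{k+1}-t_k)}\,\mathbb{E}\bigl(P(t_{k+1},S_{t_{k+1}},Q_{t_k}+q_{t_k})\,\big|\,S_{t_k}\bigr).
\]

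It then remains to exchange $\esssup_{q_{t_k}}$ with the pointwise $\max_{q\in[q_{min},q_{max}]}$ and to exhibit a Borel selector $q^*(t_k,S_{t_k},Q_{t_k})$. Since $q\mapsto q(S_{t_k}-K)+e^{-r(t_{k+1}-t_k)}\mathbb{E}(P(t_{k+1},S_{t_{k+1}},Q_{t_k}+q)|S_{t_k})$ is, at each $\omega$, continuous on the compact interval $[q_{min},q_{max}]$ (continuity in $q$ of the conditional expectation follows from dominated convergence using the polynomial growth of $P_{_T}$, which propagates to $P(t_{k+1},\cdot,\cdot)$ by induction), the maximum is attained. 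A standard measurable selection theorem (e.g.\ Benes or Berge's maximum theorem) then delivers a Borel measurable map $q^*(t_k,\cdot,\cdot)$ returning an argmax, and a monotone-class / approximation argument shows that this pointwise maximum equals the essential supremum over all admissible $\mathcal{F}_{t_k}$-measurable controls.

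The main obstacle is the exchange $\esssup\leftrightarrow\max$ together with the construction of the Borel selector: one must justify that $P(t_{k+1},\cdot,\cdot)$ inherits enough regularity (continuity, or at least upper semi-continuity, in its last argument, plus polynomial growth in $x$) from step $k+1$ to step $k$ in order to invoke dominated convergence and measurable selection. This regularity is preserved through the induction because taking a maximum over $q$ of a continuous function preserves continuity in the remaining parameters and the polynomial bound propagates with constants depending only on $n-k$, $C$, $p$, $q_{max}$ and $\|K\|$.
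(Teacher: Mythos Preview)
The paper does not actually prove this proposition: it merely states it and attributes the result to \cite{swing} (Barrera-Esteve \emph{et al.}). So there is no argument in the paper to compare against yours.

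Your backward-induction scheme---splitting off the first term, using the tower property and the Markov property of $(S_{t_k})$ to reduce the tail to $P(t_{k+1},S_{t_{k+1}},Q_{t_k}+q_{t_k})$, then replacing the $\esssup$ over $\mathcal F_{t_k}$-measurable $q_{t_k}$ by a pointwise $\max$ on $[q_{min},q_{max}]$ via continuity in $q$ and a measurable-selection theorem---is the standard route to such dynamic programming principles and is, in outline, what one would expect the cited reference to contain. The propagation of polynomial growth and of continuity in $Q$ through the backward recursion is indeed the point that requires care, and you have identified it correctly; once that is secured, dominated convergence and Berge's theorem do the job.
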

Usually, the function $P_{_T}(x,Q)$ is given by~(\ref{penal}). Then, the case with firm constraints corresponds to the limit case where  $P_{_T}(x,Q)=(-\infty)\mbox{\bf 1}_{\{x\notin [Q_{\min},Q_{max}]\}}$.

When considering a  contract with firm constraints, a more operating
form (see~\cite{swingquantif}) can be the following
  \begin{eqnarray}\label{Eq:PgmDynfirm}
  P(t_k,S_{t_k},Q_{t_k})\!\!&\!\!=\!\!&\!\!\displaystyle
\max \!\!\left\{\!q(S_{t_k}\!-\!K)
  +\mathbb{E}(P(t_{k+1},S_{t_{k+1}},Q_{t_k} \!+\! q)|S_{t_k}),\qquad \right.  \\
\nonumber && \left. q \!\in [q_{\min},q_{\max}],\;
Q_{t_k}+q\!\in[(Q_{\min}-(n-k)q_{\max})_+,(Q_{\max}-(n-k)q_{\min})_+]
\right\}.
\end{eqnarray}

\subsection{Bang Bang consumption}
\label{Section:BangBang}
\subsubsection{Case with penalties on purchased volumes}
 \cite{swing} showed
the  following theoretical result.
\begin{thm}
Consider the Problem~\ref{Eq:Prix} and $P_{_T}(x,Q)=-xP(Q)$, $P$ being a
continuously differentiable function. If the following condition
holds
$$
\mathbb{P}\left(e^{-rt_k}(S_{t_k}-K) +
\mathbb{E}(e^{-rT}S_TP'(Q^*_T)|S_{t_k},Q^*_{t_k})=0\right)=0,$$ the
optimal consumption at time $t_k$ is necessarily of bang-bang type
 given by
\begin{eqnarray*}
q^*(t_k,S_{t_k},Q_{t_k}^*) & = &
q_{max}\mbox{\bf 1}_{\{e^{-rt_k}(S_{t_k}-K) +
\mathbb{E}(e^{-rT}S_TP'(Q_T^*)|S_{t_k},Q_{t_k}^*)>0\}}\\
& & +q_{min}\mbox{\bf 1}_{\{e^{-rt_k}(S_{t_k}-K) +
\mathbb{E}(e^{-rT}S_TP'(Q_T^*)|S_{t_k},Q_{t_k}^*)<0\}}.
\end{eqnarray*}
\end{thm}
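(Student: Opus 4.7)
The plan is a variational/perturbation argument at the optimum, combined with the Markov property to reduce the conditioning $\sigma$-field. Let $(q^*_{t_\ell})_{0\le\ell<n}$ be an optimal consumption and let
\[
J(q)=\mathbb{E}\Bigl[\sum_{\ell=0}^{n-1}e^{-rt_\ell}q_{t_\ell}(S_{t_\ell}-K)-e^{-rT}S_T P(Q_T)\Bigr],\qquad Q_T=\sum_{\ell=0}^{n-1}q_{t_\ell},
\]
denote the premium functional associated with $P_T(x,Q)=-xP(Q)$. Fix an index $k$ and a bounded $\mathcal{F}_{t_k}$-measurable direction $h$, and define the perturbed control $q^\varepsilon$ by replacing only $q^*_{t_k}$ with $q^*_{t_k}+\varepsilon h$, the other components of $q^*$ being unchanged. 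For $|\varepsilon|$ small enough, $q^\varepsilon$ is admissible as soon as $h\ge 0$ on $\{q^*_{t_k}=q_{\min}\}$ and $h\le 0$ on $\{q^*_{t_k}=q_{\max}\}$; the optimality of $q^*$ then imposes $J(q^\varepsilon)\le J(q^*)$ on the corresponding one-sided $\varepsilon$-interval.

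Next I would compute the directional derivative $J'(0)$ by differentiating under the expectation, which is legitimate because $P\in C^1$, $Q_T$ lies in the bounded interval $[0,nq_{\max}]$, and the polynomial growth assumption on $P_T$ furnishes an integrable domination. Up to the sign convention in $P$, this yields
\[
J'(0)=\mathbb{E}\bigl[h\,\Psi_k\bigr],\qquad \Psi_k:=e^{-rt_k}(S_{t_k}-K)+e^{-rT}S_T P'(Q^*_T).
\]
Since $h$ is $\mathcal{F}_{t_k}$-measurable, conditioning inside the expectation gives $J'(0)=\mathbb{E}[h\,\varphi_k]$, with $\varphi_k:=\mathbb{E}(\Psi_k\mid\mathcal{F}_{t_k})$. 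Specializing $h$ to $\mathbf{1}_A\,\mathrm{sgn}(\varphi_k)$ on events $A\in\mathcal{F}_{t_k}$ on which the sign of $h$ is admissible yields, pointwise almost surely,
\[
\varphi_k>0\ \Longrightarrow\ q^*_{t_k}=q_{\max},\qquad \varphi_k<0\ \Longrightarrow\ q^*_{t_k}=q_{\min}.
\]
The hypothesis $\mathbb{P}(\varphi_k=0)=0$ then eliminates the indeterminate case and forces $q^*_{t_k}\in\{q_{\min},q_{\max}\}$.

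It remains to replace the conditioning on $\mathcal{F}_{t_k}$ by the conditioning on $(S_{t_k},Q^*_{t_k})$ appearing in the statement. This is where the Markov property of $(S_{t_\ell})$ is used, together with the Markovian form of the optimal policy $q^*_{t_\ell}=q^*(t_\ell,S_{t_\ell},Q^*_{t_\ell})$ for $\ell\ge k$ supplied by the dynamic programming proposition. Under these assumptions the joint conditional law of $(S_T,Q^*_T)$ given $\mathcal{F}_{t_k}$ depends only on the pair $(S_{t_k},Q^*_{t_k})$, hence
\[
\mathbb{E}\bigl(S_T P'(Q^*_T)\mid\mathcal{F}_{t_k}\bigr)=\mathbb{E}\bigl(S_T P'(Q^*_T)\mid S_{t_k},Q^*_{t_k}\bigr),
\]
which produces the form stated in the theorem.

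The main obstacle I anticipate is the bookkeeping required to justify the differentiation under the expectation uniformly in $\varepsilon$ and to perform the measurable selection of $h$. Both steps are handled routinely using the polynomial growth bound $|P_T(x,Q)|\le C(1+x^p)$, the $C^1$ regularity of $P$, and the a priori boundedness of the cumulated consumption $Q^*_T\in[0,nq_{\max}]$, which together provide the integrable dominants needed at each step of the perturbation argument.
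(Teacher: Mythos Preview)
The paper does not give its own proof of this theorem: it is quoted verbatim from \cite{swing} (Barrera-Esteve \emph{et al.}), followed only by a remark on how one might verify the hypothesis. So there is no argument in the paper to compare against.

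Your variational approach is the standard way to obtain such a bang-bang characterization and is essentially correct. The main steps --- perturbing only the $k$-th control, differentiating $J$ under the expectation (legitimate thanks to $P\in C^1$ and the a priori bound $Q^*_T\in[nq_{\min},nq_{\max}]$), extracting the one-sided sign conditions on $\varphi_k=\mathbb{E}[\Psi_k\mid\mathcal{F}_{t_k}]$ from optimality, and using $\mathbb{P}(\varphi_k=0)=0$ to rule out the interior --- are all sound. The final reduction of the conditioning from $\mathcal{F}_{t_k}$ to $(S_{t_k},Q^*_{t_k})$ is also correctly justified via the Markov property of $S$ together with the Markovian form of the optimizer provided by the dynamic programming proposition. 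Two minor points of bookkeeping: (i) differentiating $-e^{-rT}S_TP(Q_T)$ in $q_{t_k}$ produces $-e^{-rT}S_TP'(Q^*_T)$, so the $+$ sign in the displayed formula reflects a sign convention on $P$ opposite to the one suggested by~(\ref{penal}); you flagged this, and it is indeed a convention issue in the original reference rather than a flaw in your argument; (ii) to make $q^\varepsilon$ admissible uniformly in $\varepsilon$ you should restrict $h$ to events of the type $\{q^*_{t_k}\le q_{\max}-\delta\}$ (resp.\ $\{q^*_{t_k}\ge q_{\min}+\delta\}$) and let $\delta\downarrow 0$ at the end --- a routine detail your sketch absorbs into ``for $|\varepsilon|$ small enough''.
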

The above assumption seems difficult to check since it involves the unknown optimal consumption. However, this would be the case provided one shows that the random variable $e^{-rt_k}(S_{t_k}-K) +
\mathbb{E}(e^{-rT}S_TP'(Q^*_T)|S_{t_k},Q^*_{t_k})$ is absolutely continuous as noticed in~\cite{swing}.

\subsubsection{Case with firm constraints}
In the companion paper~\cite{swingquantif}, we establish some
properties of the value function of the  swing options viewed {\em
as a function of the global volume constraints} $(Q_{min},Q_{max})$.
Thanks to~(\ref{decompswing}) one may assume without loss of
generality that the contract is normalized, $i.e.$ $q_{min}=0$ and
$q_{max}=1$. We consider the following value function:
$$
P(Q_{min},Q_{max})=\sup_{(q_{t_k})_{0\leq k \leq n-1} \in
\mathcal{A}^{Q_{min},Q_{max}}_{0,0}}\mathbb{E}\left(\sum_{k=0}^{n-1}e^{-r
t_k}q_{t_k}(S_{t_k}-K)\right)
$$
defined on the unit (upper) simplex $\{(u,v)\!\in \mathbb{R}^2,\, 0\le u\le v\le n\}$.
\begin{prop}
The premium function $(Q_{min},Q_{max}) \mapsto P(Q_{min},Q_{max})$ is a concave, piecewise affine
function of the global purchased volume constraints, affine on elementary triangles
$(m,M) + \{(u, v), 0 \leq u \leq v \leq 1\}$, $(m, M) \in
\mathbb{N}^2$, $m \leq M \leq n$ and $(m,M) + \{(u, v), 0 \leq v \leq u \leq 1\}$, $(m, M) \in
\mathbb{N}^2$, $m \leq M-1 \leq n-1$ which   tile of the unit (upper) simplex.
\end{prop}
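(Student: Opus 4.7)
The plan is to establish the three assertions---concavity, piecewise-affineness, and the explicit tile structure---in sequence: concavity by a mixing argument, the affine-interpolation lower bound on each tile by concavity and a barycentric decomposition, and the matching upper bound via an extreme-point analysis of the admissible set.

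First I would prove concavity. Given $\lambda\in[0,1]$ and admissible strategies $q^{(i)}\in\mathcal{A}_{[0,1]}^{Q_{min}^{(i)},Q_{max}^{(i)}}(0,0)$ for $i=1,2$, the combination $\lambda q^{(1)}+(1-\lambda)q^{(2)}$ is adapted, $[0,1]$-valued path-wise, and its total consumption lies almost surely in $[\lambda Q_{min}^{(1)}+(1-\lambda)Q_{min}^{(2)},\,\lambda Q_{max}^{(1)}+(1-\lambda)Q_{max}^{(2)}]$. Since the reward is linear in $q$, taking the supremum gives concavity of $P$.

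Next, on the upper tile $(m,M)+\{0\le u\le v\le 1\}$ the barycentric decomposition $(m+u,M+v)=(1-v)(m,M)+(v-u)(m,M+1)+u(m+1,M+1)$ has non-negative coefficients summing to $1$, so concavity yields the lower bound $P(m+u,M+v)\ge(1-v)P(m,M)+(v-u)P(m,M+1)+uP(m+1,M+1)$; the analogous inequality on the lower tile follows from its own barycentric decomposition with vertices $(m,M)$, $(m+1,M)$, $(m+1,M+1)$.

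For the matching upper bound I would exploit the LP structure: the feasible set for $(m+u,M+v)$ is convex and the reward linear, so the supremum is attained at an extreme point. Such extreme points are adapted strategies that are either bang-bang with integer total consumption in $\{m+1,\dots,M\}$, or ``almost bang-bang'' (bang-bang outside a single adapted index) with total consumption saturating $m+u$ or $M+v$. An almost-bang-bang extreme point of total $M+v$ decomposes as $(1-v)\bar q^{(M)}+v\,\bar q^{(M+1)}$ with each $\bar q^{(j)}$ adapted bang-bang of sum $j$; similarly $(1-u)\bar q^{(m)}+u\,\bar q^{(m+1)}$ for total $m+u$. Each bang-bang constituent has value at most $V_j:=P(j,j)$, and the monotonicities (non-increasing in $Q_{min}$, non-decreasing in $Q_{max}$, both following from set-inclusion of the admissibility sets) bound $V_j$ by the $P$-values at the tile vertices. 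A short case analysis then shows each extreme-point value is dominated by the claimed affine interpolation, and combined with the lower bound this yields equality on the tile. Consistency across the common edge $u=v$ of the two tile families is immediate.

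The delicate step is the extreme-point characterisation in the adapted, infinite-dimensional setting, together with the accompanying bookkeeping. A technically cleaner alternative is a backward induction on $k$ via the dynamic programming equation~(\ref{Eq:PgmDynfirm}): one propagates the piecewise-affine concave structure on the two-dimensional residual-constraint plane at each step, tracking how the tile geometry transforms under the maximisation over $q\in[0,1]$.
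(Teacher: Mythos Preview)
The paper does not prove this proposition: it is stated without argument and attributed to the companion paper~\cite{swingquantif}, so there is no in-paper proof to compare against.

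On the merits of your sketch: the concavity argument and the barycentric lower bound are correct and standard. For the upper bound, the case analysis you outline can indeed be pushed through once the extreme-point structure is established---each bang-bang constituent with integer total $j\in\{m,\ldots,M+1\}$ is admissible for every vertex constraint pair that carries a nonzero barycentric weight, and the coefficients $(1-v,\,v-u,\,u)$ then absorb the bookkeeping.

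The genuine gap is the one you flag yourself, and your description of the extreme points is slightly too coarse. In the adapted setting an extreme point need not have its (random) total saturating a single boundary: for instance with $n=2$ and $(Q_{\min},Q_{\max})=(0.3,1.7)$, the strategy $q_0=0.3$, $q_1\in\{0,1\}$ $\mathcal{F}_{t_1}$-measurable, is extreme, almost bang-bang at the single index $0$, yet its total equals $0.3$ on one event and $1.3$ on the complement---only one of these saturates a constraint. More generally the fractional index is a stopping time and the fractional value may itself be random, so the neat dichotomy ``integer total or saturating $m+u$ or $M+v$'' and the associated deterministic convex decomposition $(1-v)\bar q^{(M)}+v\,\bar q^{(M+1)}$ do not hold as stated. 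The repair is to decompose path-wise at the fractional index into the two neighbouring integer values and argue that both resulting bang-bang strategies are adapted with totals in $\{m,\ldots,M+1\}$; this works but is more than a ``short case analysis''.

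Your closing alternative---propagating the piecewise-affine concave structure backward through the dynamic programming recursion~(\ref{Eq:PgmDynfirm})---is the cleaner and more robust route, and avoids the infinite-dimensional extreme-point analysis entirely.
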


\begin{thm}
For integral valued global constraints, $i.e.$ $(Q_{min},Q_{max})\!\in \mathbb{N}^2$,
there  always exists a bang-bang optimal strategy $i.e.$ the {\em a priori}  $[0,
1]$-valued optimal purchased quantities $q_{t_k}^*$ are in fact
always equal to $0$ or $1$.
\end{thm}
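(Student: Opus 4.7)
The plan is a forward induction on $k=0,\dots,n-1$ showing that one can maintain an optimal cumulated volume $Q_{t_k}^{*}\in\mathbb{N}$ and, at each step, select an optimal quantity $q_{t_k}^{*}\in\{0,1\}$. The base case $Q_{t_0}^{*}=0\in\mathbb{N}$ is immediate.

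For the inductive step, assume $Q_{t_k}^{*}\in\mathbb{N}$ and set $V_{k}(Q):=\mathbb{E}\bigl(P(t_{k+1},S_{t_{k+1}},Q)\,|\,S_{t_k}\bigr)$. The dynamic programming equation~(\ref{Eq:PgmDynfirm}) then requires maximising $q\mapsto q(S_{t_k}-K)+V_{k}(Q_{t_k}^{*}+q)$ over a feasibility interval $I_k\subseteq[0,1]$. Since $Q_{\min}$, $Q_{\max}$ and $Q_{t_k}^{*}$ are integers and we are in the normalised setting ($q_{\min}=0,\,q_{\max}=1$), the endpoints of $I_k$ are integers and therefore $I_k\in\bigl\{\{0\},\,\{1\},\,[0,1]\bigr\}$. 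If $V_{k}$ is affine on $[Q_{t_k}^{*},Q_{t_k}^{*}+1]$, the whole objective is affine in $q\in[0,1]$ and attains its maximum on $I_k$ at an endpoint lying in $\{0,1\}$; choosing $q_{t_k}^{*}$ at such a point preserves the inductive hypothesis.

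To establish this affinity, I would argue pathwise using the preceding Proposition. For each realisation $s$ of $S_{t_{k+1}}$, the Markov property identifies $P(t_{k+1},s,Q)$ with the value of a swing sub-problem with horizon $n-k-1$ and effective firm global constraints $[(Q_{\min}-Q)_{+},\,Q_{\max}-Q]$. The Proposition, applied to this sub-problem, says that this conditional value is piecewise affine in the pair of constraints, with kinks only at the integer vertices of the tiling triangles; along the diagonal direction $(Q_{\min}-Q,Q_{\max}-Q)$, kinks therefore occur only at integer values of $Q$. Conditional expectation against $S_{t_k}$ is linear and preserves affinity on each integer interval, so $V_{k}$ is affine on $[m,m+1]$ for every $m\in\mathbb{N}$, and in particular on $[Q_{t_k}^{*},Q_{t_k}^{*}+1]$.

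The main subtlety is that the preceding Proposition is phrased for the (unconditional) time-$0$ premium, whereas the induction needs its pathwise/conditional counterpart. I would secure this either by invoking the stronger statement actually proved in~\cite{swingquantif} or by reducing to it via the Markov property, treating the optimisation on $[t_{k+1},T]$ conditionally on $S_{t_{k+1}}=s$ as an instance of the same problem (only the horizon and constraints shrink). Once that point is made rigorous, the remainder is the one-line observation that an affine function on an interval with endpoints in $\{0,1\}$ is maximised at one of them, which closes the induction and produces a fully bang-bang optimal strategy.
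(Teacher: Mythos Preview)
The paper itself does not prove this theorem: both the preceding Proposition on the piecewise-affine structure and the present Theorem are stated without proof and attributed to the companion paper~\cite{swingquantif}. So there is no in-text argument to compare against. That said, your proposal is sound and is the natural way to derive the bang-bang property from the piecewise-affine structure of the value function; it is very likely the route taken in~\cite{swingquantif}.

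Your own diagnosis of the only delicate point is accurate. The Proposition, as stated here, concerns the time-$0$ premium as a function of $(Q_{\min},Q_{\max})$, whereas your induction needs its conditional analogue, namely that $Q\mapsto P(t_{k+1},s,Q)$ is affine on each integer interval for every state $s$. The Markov reduction you sketch does the job: conditionally on $S_{t_{k+1}}=s$, the residual problem is a normalized swing contract with horizon $n-k-1$ and firm global constraints $\bigl[(Q_{\min}-Q)_+,\,Q_{\max}-Q\bigr]$, to which the Proposition applies verbatim. Two small points are worth making explicit when you write it up: (i) as $Q$ ranges over an integer interval $[m,m+1]$, the path $Q\mapsto\bigl((Q_{\min}-Q)_+,\,Q_{\max}-Q\bigr)$ is a straight segment joining two integer lattice points and lies on an edge of the triangular tiling, where the continuous piecewise-affine premium is necessarily affine; and (ii) since $Q_{\min}\in\mathbb N$, the positive part never introduces a kink in the interior of such an interval. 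With these details spelled out, the argument is complete.
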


\begin{rem} This result can be extended in some way to any couple of
global constraints when all the payoffs are nonnegative
(see~\cite{swingquantif}). Furthermore, it has nothing to do with
the Markov dynamics of the underlying  asset and holds in a quite
general abstract setting.
\end{rem}

An example of the premium function $(Q_{min},Q_{max}) \mapsto
P(Q_{min},Q_{max})$ is depicted on Figure~\ref{Fig:PriceSurface}.
\begin{center}
\begin{figure}[h!]
\begin{center}
  \includegraphics[width=12.5cm]{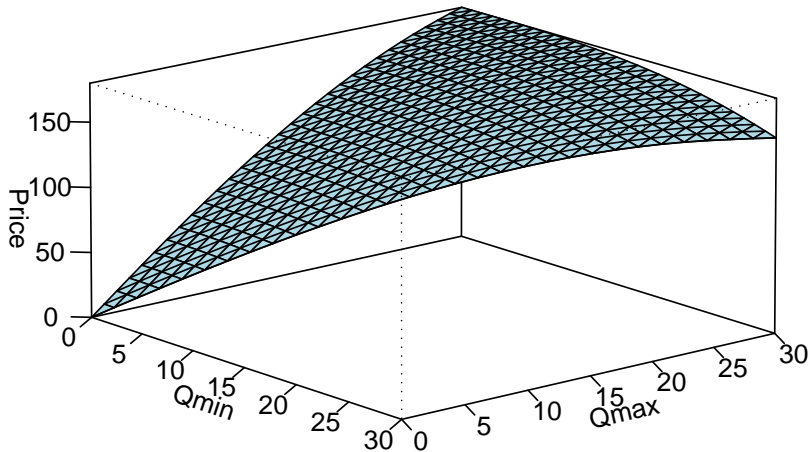}\\
  \caption{\em Value function $P(Q_{min},Q_{max})$ $versus$ the global constraints}\label{Fig:PriceSurface}
  \end{center}
\end{figure}
\end{center}
%\FloatBarrier
Now we  turn to the problem of the
numerical evaluation of such contracts. As announced, we focus on an
optimal quantization algorithm.
%After a concise description on the
%implementation of the numerical procedure, we provided simulation
%based arguments and assert that optimal quantization outperforms
%other valuation algorithms in numerous ways.

\section{Optimal quantization}
\label{Section:Quantization} Optimal Quantization
\cite{pages97,bernouilli,spa,mf} is a method coming from Signal
Processing devised to approximate a continuous signal by a discrete
one in an optimal way. Originally developed in the 1950's, it was
introduced as a quadrature formula for numerical integration in the
late 1990's, and for conditional expectation approximations in the
early 2000's, in order to price multi-asset American style options.

Let $X$ be an $\mathbb{R}^d$-valued random vector defined on a
probability space $(\Omega, \mathcal{F}, \mathbb{P})$. Quantization
consists in studying the best approximation of $X$ by random vectors
taking at most $N$ fixed values $x^1,\dots,x^N \in \mathbb{R}^d$.

\begin{defn} Let $x = (x^1,\dots,x^N) \in (\mathbb{R}^d)^N$. A partition
$(C_i(x))_{i=1,\dots,N}$ of $\mathbb{R}^d$ is a {\em Voronoi tessellation}
of the $N${-quantizer} $x$ (or {\em codebook}; the term  {\em grid} being used for $\{x^1,\ldots, x^N\}$) if, for every $i \in \{1,\dots,N\}$, $C_i(x)$ is
a Borel set satisfying
$$
C_i(x) \subset \{\xi \in \mathbb{R}^d, |\xi - x^i| \leq \min_{i \neq j}|\xi -
x^j|\}
$$
where $|\,.\,|$ denotes the canonical Euclidean norm on $\mathbb{R}^d$.
\end{defn}

The  nearest neighbour projection on $x$  induced by a Voronoi partition  is defined by
$$
{\rm Proj}_x: y \in \mathbb{R}^d \mapsto x^i\; \mbox{ if }  \;
y\!\in C_i(x).
$$
Then, we  define an {\em $x$-quantization} of $X$ by
$$
\hat{X}^x = {\rm Proj}_x(X).
$$
The {\em pointwise error} induced when  replacing $X$ by $\widehat
X^x$ is given by $|X-\widehat X^x|=
d(X,\{x^1,\ldots,x^N\})=\min_{1\le i\le N}|X-x^i|$. When $X$ has an
absolutely continuous distribution, any two $x$-quantizations are
$\mathbb{P}$-$a.s.$ equal.

\smallskip
The quadratic {\em mean quantization error} induced by the the $N$-tuple $x\!\in\mathbb{R}^d$ is defined
as the quadratic norm of the pointwise error $i.e.$ $\|X-\hat{X}^x\|_{_2}$.

We briefly recall some classical facts about theoretical and
numerical aspects of Optimal Quantization. For details we refer
$e.g.$ to \cite{graf,handbook}.
\begin{center}
\begin{figure}[h!]
\begin{center}
  \includegraphics[width=8cm]{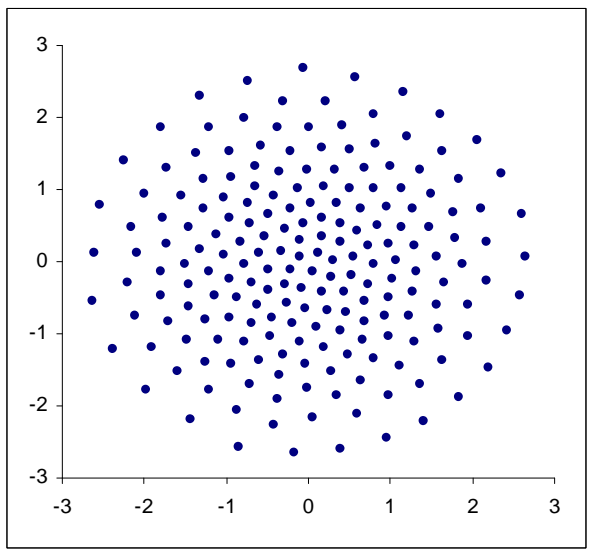}\\
  \caption{\em Optimal quadratic quantization of the normal distribution $\mathcal{N}(0,I_2)$,
$N=200$.}\label{Fig:Quantif2d}
  \end{center}
\end{figure}
\end{center}
%\FloatBarrier
\begin{thm}\cite{graf}
Let $X \!\in L^2(\mathbb{R}^d,\mathbb{P})$. The quadratic quantization error
function
$$
x=(x^1,\dots,x^N) \longmapsto
\mathbb{E}(\min_{1\leq i \leq N} |X - x^i|^2) =
\|X-\hat{X}^x\|_{_2}
$$
reaches a minimum at some quantizer $x^*$.
Furthermore, if the distribution $\mathbb{P}_X$ has an infinite
support then $x^{*,(N)}=(x^{*,1},\dots,x^{*,N})$ has pairwise distinct
components and $N \mapsto \min_{x \in
(\mathbb{R}^d)^N}\|X-\hat{X}^x\|_{_2}$ is   decreasing to $0$ as $N
\uparrow +\infty$.
\end{thm}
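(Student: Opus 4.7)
The plan is to prove the three assertions in order: existence of an optimal $N$-quantizer, pairwise distinctness of its components under the infinite-support hypothesis, and convergence to zero of the infimum.

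\emph{Existence.} I would first verify that $F_N(x):=\mathbb{E}(\min_{1\le i\le N}|X-x^i|^2)$ is continuous on $(\mathbb{R}^d)^N$ by dominated convergence, using the pointwise estimate $\bigl|\,|X-x^i|-|X-y^i|\,\bigr|\le |x^i-y^i|$ and $X\in L^2$. Next, $F_{N+1}^*\le F_N^*$ follows by appending an arbitrary point to any $N$-quantizer. The functional is not coercive (components may drift to infinity without increasing the error), so I would handle a minimizing sequence $(x_n)$ by a reduction argument: extract a subsequence so that each component $x_n^i$ either converges in $\mathbb{R}^d$ to some $y^i$ or satisfies $|x_n^i|\to\infty$. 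Let $I$ index the convergent ones; $I$ is non-empty, for otherwise $\min_i|X-x_n^i|^2\to\infty$ a.s.\ and Fatou would force $F_N(x_n)\to\infty$. The escaping components eventually contribute nothing to the nearest-neighbour distance, so $\min_i|X-x_n^i|^2\to\min_{i\in I}|X-y^i|^2$ almost surely. Uniform integrability via $\min_i|X-x_n^i|^2\le 2(|X|^2+\sup_n|x_n^{i_0}|^2)$ for any fixed $i_0\in I$ then yields $F_N^*=F_{|I|}((y^i)_{i\in I})\ge F_{|I|}^*\ge F_N^*$. Equality holds throughout; the limit $|I|$-quantizer is optimal, and padding with $N-|I|$ arbitrary points produces an optimal $N$-quantizer.

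\emph{Distinct components under infinite support.} I would proceed by contradiction: suppose an optimal $x^*$ has $x^{*,i}=x^{*,j}$ for some $i\neq j$. Since $\mathrm{supp}(\mathbb{P}_X)$ is infinite and the codebook is finite, pick $z\in\mathrm{supp}(\mathbb{P}_X)\setminus\{x^{*,1},\dots,x^{*,N}\}$ and set $r=\tfrac{1}{3}\min_k|z-x^{*,k}|>0$, so that $\mathbb{P}(X\in B(z,r))>0$. Form $x'$ by replacing $x^{*,j}$ with $z$. The duplicate $x^{*,i}$ ensures that removing $x^{*,j}$ does not change the nearest-neighbour distance, while adjoining $z$ can only reduce it, strictly so on $\{X\in B(z,r)\}$: indeed $|X-z|<r$ whereas $|X-x^{*,k}|\ge |z-x^{*,k}|-|X-z|\ge 3r-r=2r$ for every $k\neq j$. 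Hence $F_N(x')<F_N(x^*)$, contradicting optimality.

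\emph{Convergence to $0$.} The non-increasing character of $N\mapsto F_N^*$ was already noted. For the limit, fix $\varepsilon>0$; since $X\in L^2$, pick $M>0$ with $\mathbb{E}(|X|^2\mathbf{1}_{\{|X|>M\}})<\varepsilon/2$. Partition $\overline{B}(0,M)$ into finitely many Borel sets $B_1,\dots,B_{N_0}$ of diameter at most $\sqrt{\varepsilon/2}$, choose $x^i\in B_i$, and set $x^{N_0+1}=0$. For any $N\ge N_0+1$, completing this family arbitrarily to an $N$-quantizer $x$ and using $\min_j|X-x^j|^2\le |X-x^i|^2$ on $\{X\in B_i\}$ and $\le |X|^2$ on $\{|X|>M\}$ gives
$$\|X-\widehat X^x\|_2^2\le \sum_{i=1}^{N_0}\mathbb{E}\bigl(|X-x^i|^2\mathbf{1}_{\{X\in B_i\}}\bigr)+\mathbb{E}\bigl(|X|^2\mathbf{1}_{\{|X|>M\}}\bigr)\le \tfrac{\varepsilon}{2}+\tfrac{\varepsilon}{2}=\varepsilon,$$
whence $F_N^*\le\varepsilon$ for $N\ge N_0+1$, so $F_N^*\downarrow 0$.

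The main obstacle is the first assertion: because $F_N$ fails to be coercive, the naive direct method breaks down, and the reduction-by-monotonicity trick is the pivotal ingredient. The remaining two claims then follow from a local perturbation and a standard $L^2$-density argument, respectively.
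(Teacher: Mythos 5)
The paper itself gives no proof of this theorem: it is recalled from the cited reference (Graf--Luschgy), so there is no in-paper argument to compare against. Your proposal is, in substance, the classical proof from that reference, and it is correct: the reduction of a minimizing sequence by splitting components into convergent and escaping ones (with Fatou ruling out total escape and a dominated-convergence/uniform-integrability step passing to the limit) is exactly the right way around the lack of coercivity of $F_N$; the duplicate-removal perturbation using a support point $z$ and the radius $r=\tfrac13\min_k|z-x^{*,k}|$ is the standard argument for pairwise distinctness; and the truncation-plus-finite-partition argument gives $F_N^*\downarrow 0$ for any $X\in L^2$.

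One small omission: the statement asserts that $N\mapsto\min_x\|X-\hat X^x\|_{_2}$ is \emph{decreasing} (strictly, under the infinite-support hypothesis), whereas you only establish that it is non-increasing and tends to $0$. This is not a structural flaw, because your own perturbation argument closes it immediately: given an optimal $N$-quantizer $x^*$, infinite support provides $z\in\mathrm{supp}(\mathbb{P}_X)\setminus\{x^{*,1},\dots,x^{*,N}\}$, and appending $z$ yields, on the event $\{X\in B(z,r)\}$ of positive probability, a drop of the pointwise squared error from at least $4r^2$ to less than $r^2$, whence $F_{N+1}^*\le F_{N+1}((x^*,z))<F_N(x^*)=F_N^*$. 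With that one line added, the proof fully covers the statement. (The equality $\mathbb{E}(\min_i|X-x^i|^2)=\|X-\hat X^x\|_{_2}$ in the paper's display is a typo --- the right-hand side should be squared --- and does not affect your argument.)
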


Figure~\ref{Fig:Quantif2d} shows a quadratic optimal quantization grid for a bivariate
normal distribution $\mathcal{N}(0,I_2)$. The convergence rate to $0$ of optimal quantization error is ruled by the
so-called Zador Theorem.
\begin{thm}
\cite{graf} Let $X \in L^{2+\delta}(\mathbb{P}),\delta > 0$, with
$\mathbb{P}_X(d\xi) = \varphi(\xi)\lambda_d(d\xi) + \nu(d\xi)$, $\nu
\perp \lambda_d$ ($\lambda_d$ Lebesgue measure on $\mathbb{R}^d$).
Then
$$
\lim_{N\rightarrow +\infty} (N^{\frac{2}{d}}\hskip -0.2 cm \min_{x\in (\mathbb{R}^d)^N} \|X-\hat{X}^x\|_{_2}) =
J_{2,d} \left(\int_{\mathbb{R}^d}\varphi^{\frac{d}{d+2}}d
\lambda_d\right)^{1+\frac{2}{d}}.
$$
\end{thm}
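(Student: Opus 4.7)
The plan is to prove Zador's theorem in four steps, following the classical Bucklew--Wise / Graf--Luschgy strategy: first isolate a universal constant from the uniform distribution on the unit cube, then treat the absolutely continuous compactly supported case by a piecewise approximation, then discard the singular part and the compactness assumption, and finally close the argument with a matching lower bound. \textbf{Step 1 (universal constant).} For $U$ uniform on $[0,1]^d$ I would show that
$$J_{2,d}^{1+2/d} \;:=\; \lim_{N\to\infty} N^{2/d}\, \min_{x\in(\mathbb{R}^d)^N} \|U-\hat U^x\|_2^2$$
exists, is finite and strictly positive. The existence rests on a Fekete-type sub-additivity argument: partition $[0,1]^d$ into $k^d$ congruent sub-cubes, place $\lfloor N/k^d\rfloor$ optimally rescaled quantizers on each, and use the intrinsic scaling $e_N^2(\text{cube of side } h) = h^2 e_N^2(\text{unit cube})$ to compare the resulting distortion to the global minimum; positivity comes from a trivial volume lower bound on any Voronoi cell.

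\textbf{Step 2 (upper bound, compactly supported absolutely continuous case).} Assume $\nu=0$ and $\varphi$ compactly supported. Partition the support by a regular lattice of cubes $C_j$ of side $h$ and replace $\varphi$ by its mean $\varphi_j$ on each $C_j$. For any allocation $(N_j)$ with $\sum_j N_j=N$, placing $N_j$ (asymptotically optimal) quantizers in $C_j$ yields, by Step~1, a total distortion
$$\mathrm{dist}(N) \;\approx\; J_{2,d}^{1+2/d}\sum_j \varphi_j\, |C_j|^{1+2/d}\, N_j^{-2/d}.$$
Minimising in $(N_j)$ under $\sum_j N_j=N$ (Lagrangian, or Hölder with conjugate exponents $\tfrac{d+2}{d}$ and $\tfrac{d+2}{2}$) gives Zador's allocation rule $N_j\propto(\varphi_j|C_j|)^{d/(d+2)}$ and the bound
$$\mathrm{dist}(N)\;\lesssim\; J_{2,d}^{1+2/d}\, N^{-2/d}\,\Bigl(\sum_j \varphi_j^{d/(d+2)}|C_j|\Bigr)^{1+2/d}.$$
Letting $h\downarrow 0$, the Riemann sum converges to $\bigl(\int \varphi^{d/(d+2)}\,d\lambda_d\bigr)^{1+2/d}$, which is the Zador integral.

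\textbf{Step 3 (singular part and unbounded support).} The singular component $\nu$ contributes only $o(N^{-2/d})$: for any $\varepsilon>0$ one encloses all but $\varepsilon$ of its mass in a set $A_\varepsilon$ of arbitrarily small Lebesgue measure, and spending $o(N)$ quantizers on a regular grid inside $A_\varepsilon$ drives the associated distortion below any multiple of $N^{-2/d}$. To remove compactness, truncate $X$ to a ball $B_R$; the $L^{2+\delta}$ hypothesis combined with a non-asymptotic Pierce-type bound of the form $N^{2/d}\min_x\|X-\hat X^x\|_2^2 \le C(\delta)\|X\|_{2+\delta}^2$ justifies interchanging truncation with $\liminf$ and $\limsup$. \textbf{Step 4 (lower bound).} Given an $\varepsilon$-optimal $N$-quantizer $x^{(N)}$, form the empirical measure $\mu_N = N^{-1}\sum_{i=1}^N \delta_{x_i^{(N)}}$ and decompose the distortion into its Voronoi contributions. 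A local lower bound on each cell (the uniform-cube case of Step~1) combined with Jensen's inequality applied to the concave function $t\mapsto t^{d/(d+2)}$ and Lebesgue's differentiation theorem to compare the local density of $\mu_N$ to that of $\varphi$, reproduces the same Zador integral as a lower bound, regardless of the choice of $x^{(N)}$.

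The hard part is precisely this lower bound in Step~4: the upper bound is essentially a clean Riemann-sum optimisation, whereas the lower bound requires showing that no configuration of quantizers can asymptotically beat the $\varphi^{d/(d+2)}$ density rule. This demands a delicate control of the empirical distribution of Voronoi cells (cells where $\varphi$ is large must be neither too numerous nor too few), and is the technical heart of Graf--Luschgy's treatment. A cleaner but less quantitative alternative is the randomisation idea of Zador: sampling $N$ i.i.d.\ points from the density $\varphi^{d/(d+2)}/\int\varphi^{d/(d+2)}$ already attains the claimed asymptotics in expectation, yielding the upper bound effortlessly; but a genuine Jensen/differentiation argument remains necessary on the lower-bound side.
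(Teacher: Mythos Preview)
The paper does not prove this theorem: it is quoted from Graf and Luschgy's monograph \cite{graf} as background and stated without any argument, so there is no ``paper's own proof'' to compare your proposal against. Your four-step outline is a faithful summary of the classical Bucklew--Wise / Graf--Luschgy route (uniform cube constant via subadditivity, upper bound by piecewise-constant approximation and optimal allocation, removal of the singular part and of compactness via a Pierce-type moment bound, matching lower bound via a local cube estimate and Jensen), and if carried out in detail it would constitute a correct proof.

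One minor point worth noting for consistency: in Step~1 you work with the \emph{squared} quantization error $N^{2/d}\min_x\|U-\hat U^x\|_2^2$, whereas the theorem as printed in the paper has $N^{2/d}$ multiplying the unsquared $L^2$ norm. This is a normalization slip in the statement rather than a flaw in your argument; the standard form of Zador's theorem matches your convention.
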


The true value of $J_{2,d}$ is unknown as soon as $d \geq 3$. One
only knows that $J_{2,d} = \frac{d}{2\pi e}+o(d)$.

Zador's Theorem implies that $\|X-\hat{X}^{x^{*,(N)}}\|_{_2} =
O(N^{-\frac{1}{d}})$ as $N \rightarrow + \infty$.

\begin{prop}\cite{pages97,handbook} Any
$L^2$-optimal quantizer $x\!\in \mathbb{R}^d$ satisfy the following stationarity property
$$
\mathbb{E}(X|\hat{X}^x) = \hat{X}^x.
$$
In particular, for any stationary quantizer $\mathbb{E}(X) =
\mathbb{E}(\hat{X}^x)$.
\end{prop}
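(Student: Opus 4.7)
The plan is to exploit a classical ``freeze the partition'' variational argument which avoids any delicate differentiation of the distortion function across Voronoi boundaries.

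First I would write the quadratic distortion associated with an arbitrary codebook $y=(y^1,\ldots,y^N)\!\in(\mathbb{R}^d)^N$ as
\[
D_N(y)=\mathbb{E}\Bigl(\min_{1\le j\le N}|X-y^j|^2\Bigr)=\sum_{i=1}^N\int_{C_i(y)}|\xi-y^i|^2\,d\mathbb{P}_X(\xi),
\]
and, for our given optimal quantizer $x$, observe the elementary upper bound
\[
D_N(y)\;\le\;\sum_{i=1}^N\int_{C_i(x)}|\xi-y^i|^2\,d\mathbb{P}_X(\xi),
\]
which follows because, on the cell $C_i(x)$ of the \emph{fixed} Voronoi partition attached to $x$, the nearest neighbour of $\xi$ in $\{y^1,\ldots,y^N\}$ yields a cost no larger than $|\xi-y^i|^2$. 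Since $x$ realizes the minimum of $D_N$, this gives
\[
\sum_{i=1}^N\int_{C_i(x)}|\xi-x^i|^2\,d\mathbb{P}_X(\xi)=D_N(x)\le D_N(y)\le\sum_{i=1}^N\int_{C_i(x)}|\xi-y^i|^2\,d\mathbb{P}_X(\xi),
\]
so the vector $x$ minimizes, over all $y\!\in(\mathbb{R}^d)^N$, the right-hand side in which the cells $C_i(x)$ are now frozen.

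The decisive observation is that the right-hand side is a sum whose $i$-th summand depends only on the single variable $y^i$. Thus each $x^i$ must individually minimize $y^i\mapsto\int_{C_i(x)}|\xi-y^i|^2\,d\mathbb{P}_X(\xi)$. On every cell with $\mathbb{P}(X\!\in C_i(x))>0$, this is a strictly convex quadratic in $y^i$ and its unique minimizer is the conditional expectation
\[
x^i=\frac{1}{\mathbb{P}(X\!\in C_i(x))}\int_{C_i(x)}\xi\,d\mathbb{P}_X(\xi)=\mathbb{E}(X\,|\,X\!\in C_i(x)).
\]
Summing these identities against $\mathbf{1}_{\{X\in C_i(x)\}}$ yields $\mathbb{E}(X\,|\,\hat X^x)=\hat X^x$, and the second assertion follows by taking expectations and using the tower property: $\mathbb{E}(\hat X^x)=\mathbb{E}(\mathbb{E}(X\,|\,\hat X^x))=\mathbb{E}(X)$.

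The only genuine nuisance is the handling of cells of zero $\mathbb{P}_X$-mass, where $x^i$ is not pinned down by the variational inequality; this is harmless because the conditional expectation $\mathbb{E}(X\,|\,\hat X^x)$ is only defined up to a $\mathbb{P}$-null set and these cells contribute nothing. Beyond that, the argument is entirely algebraic: freezing the partition bypasses the need to differentiate $D_N$ through the moving Voronoi boundaries, which is where a more computational proof would have to invoke the cancellation of boundary terms arising from the equality $|\xi-x^i|=|\xi-x^j|$ on $\partial C_i(x)\cap\partial C_j(x)$.
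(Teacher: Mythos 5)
The paper itself gives no proof of this proposition -- it is quoted from the cited references \cite{pages97,handbook} -- so there is nothing in-paper to compare against; your argument is correct and is in fact the classical one used in those references: freezing the Voronoi partition of the optimal quantizer turns the distortion bound into a separable sum, each summand being a strictly convex quadratic in $y^i$ whose minimizer on a positive-mass cell is the conditional mean, whence $\mathbb{E}(X\,|\,\hat X^x)=\hat X^x$ and, by the tower property, $\mathbb{E}(\hat X^x)=\mathbb{E}(X)$. Your handling of the two minor points (zero-mass cells, and the fact that the separable frozen-cell functional is minimized at $x$ because it sandwiches $D_N$ there) is sound; the only alternative route in the literature is to differentiate the distortion $D_N$ directly and use the vanishing of the boundary terms, which your variational argument deliberately and legitimately bypasses.
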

The random vector $\hat{X}^x$ takes its value in a finite space $\{x^1,\dots,x^N\}$, so
for every continuous functional $f:\mathbb{R}^d\to \mathbb{R}$ with $f(X)\!\in L^2(\mathbb{P})$ , we have
$$
\mathbb{E}(f(\hat{X}^x)) =
\sum_{i=1}^N f(x^i) \mathbb{P}(X \in C_i(x))
$$
which is the quantization based quadrature formula to approximate
$\mathbb{E}\left(f(X)\right)$ \cite{pages97,handbook}. As
$\hat{X}^x$ is close to $X$ , it is natural to estimate
$\mathbb{E}(f(X))$ by $\mathbb{E}(f(\hat{X}^x))$ when $f$ is
continuous. Furthermore, when $f$ is smooth enough, on can upper
bound the resulting error using $\|X-\hat{X}^x\|_{_2}$, or even
$\|X-\hat{X}^x\|_{_2}^2$ (when the quantizer $x$ is stationary).

The same idea can be used to approximate the conditional expectation
$\mathbb{E}(f(X)|Y)$ by $\mathbb{E}(f(\hat{X})|\hat{Y})$, but one
also needs the transition probabilities:
$$
\mathbb{P}(X  \in C_j(x)|Y \in C_i(y)).
$$

The application of this technique to the quantization of spot price
processes is discussed in details in the Annex, page
\pageref{Section:FPQ}.

\section{Pricing swing contracts with optimal quantization}
\subsection{Description of the algorithm (general setting)}
In this section we assume that $(S_{t_k})_{0\le k\le n}$ is a Markov
process. For sake of simplicity, we consider that there is no
interest rate. We also consider a normalized contract, as defined in
Section \ref{Section:DecompSwing}.

In the penalized problem, the price of the swing option is given by the following dynamic programming equation (see Equation \ref{Eq:PgmDyn}): \\

$\left\{
  \begin{array}{l}
    P(t_k,S_{t_k},Q_{t_k})=\displaystyle \max_{q \in \{0,1\}}{[q(S_{t_k}-K) +
\mathbb{E}(P(t_{k+1},S_{t_{k+1}},Q_{t_k}+q)|S_{t_k})]} \\
\ \\
    P(T,S_T,Q_T)= P_{_T}(S_T,Q_T)
  \end{array}
\right.$\\
where $t_k = k \Delta,k=0,\dots,n$, $\Delta = \frac{T}{n}$.

The bang-bang feature of the optimal consumption (see Section
\ref{Section:BangBang}) allows us to limit the possible values of
$q$ in the dynamic programming equation to $q\!\in \{0,1\}$. At time
$t_k$, possible values of the cumulative consumption are
\begin{equation}Q_{t_k}^\ell = \ell , 0
\leq \ell  \leq k.\label{Eq:CumulativeVolume}
\end{equation}

At every time $t_k$ we consider a(n optimized) $N_k$-quantization
$\widehat S_{t_k}=\widehat S_{t_k}^{x_k^{(N)}}$, $k=0,\ldots,n$
based on an optimized quantization $N_k$-tuple (or grid)
$x_k^{(N_k)}:=\left(s_k^1,\dots, s_{k}^{N_k}\right)$ of the spot
$S_{t_k}$.

The modeling of the  future price by multi-factor Gaussian processes
with memory (see Section~4 for a toy example) implies that
$(S_t)_{t\in [0,T]}$ is itself a Gaussian process. Then the
quantization of $S_{t_k}$ can be obtained by a simple
dilatation-contraction (by a factor ${\rm StD}(S_{t_k})$) from
optimal quantization grids of the (possibly multivariate) normal
distribution, to be downloaded on the website~\cite{website06}
\[
\mbox{\tt www.quantize.maths-fi.com}
\]

Then we compute the price at each time $t_k$, for all points on the
corresponding grid, and for all the possible cumulative
consumptions:
\begin{equation}\label{Eq:QuantizedPgmDyn}\left\{
  \begin{array}{l}
    P(t_k,s_{k}^i,\hat{Q}_{t_k})=\displaystyle \max_{q \in \{0,1\}}{[q(s_{k}^i-K) +
\mathbb{E}(P(t_{k+1},\hat{S}_{t_{k+1}},\hat{Q}_{t_k}+q)|\hat{S}_{t_k}=s_{k}^i)]}\\
\hskip 2,1 cm i=1,\ldots,N_k, \\
    P(T,s_T^i,\hat{Q}_T)= P_{_T}(s_T^i,\hat{Q}_T),\; i=1,\ldots,N_n.
  \end{array}\right.
\end{equation}

When considering a contract with firm constraints, we need to
compute the price at each time $t_k$, for all the points of the
quantization grid of the spot price, and for all the admissible
cumulative consumptions (See Figure \ref{Fig:AdmissibleVolume})
\begin{equation}Q_{t_k}^\ell = \ell
+\left(Q_{\min}-(n-k+1)\right)_+,
\ell=0,...,\min(k,Q_{\max})-\left(Q_{\min}-(n-k+1)\right)_+.\label{Eq:CumulativeVolumeFirm}\end{equation}
\begin{center}
\begin{figure}[h!]
\begin{center}
  \includegraphics[width=8.5cm]{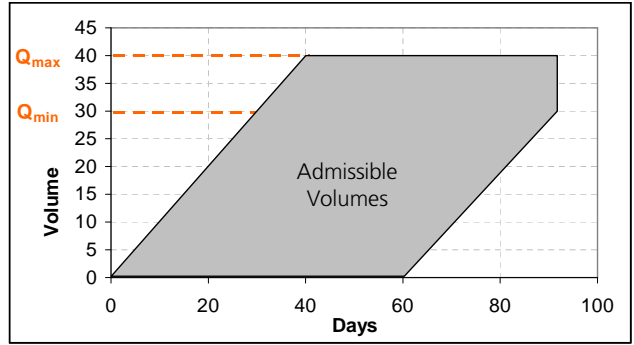}\\
   \caption{\em \label{Fig:AdmissibleVolume}Volume Constraints}
  \end{center}
\end{figure}
\end{center}
Using the bang-bang feature (See section \ref{Section:BangBang}) and
the dynamic programming principle (Equation~\ref{Eq:PgmDynfirm}),
this price is given by
\begin{eqnarray}\label{Eq:QuantizedPgmDynfirm}
 P(t_k,s_{k}^i,\hat{Q}_{t_k})\!\!&\!\!=\!\!&\!\!\displaystyle \max \!\!\left\{\!q(s_{k}^i\!-\!K)
    \!+\!\mathbb{E}(P(t_{k+1},\hat{S}_{t_{k+1}},\hat{Q}_{t_k} \!+\! q)|\hat{S}_{t_k}=s_{k}^i),\qquad \right.  \\
\nonumber && \left. \qquad q \!\in \{0,1\},\;
\hat{Q}_{t_k}+q\!\in[\left(Q_{\min}-(n-k)\right)_+,Q_{\max}]
\right\}.
\end{eqnarray}

Since $\hat{S}_{t_k}$ takes its values in a finite space, we can
rewrite the conditional expectation as:
$$
\mathbb{E}(P(t_{k+1},\hat{S}_{t_{k+1}},Q)|\hat{S}_{t_k}=s_{k}^i)
= \sum_{j=1}^{N_{k+1}}P(t_{k+1},s_{k+1}^j,Q) \pi^{ij}_k
$$
where
$$
\pi^{ij}_k = \mathbb{P}(\hat{S}_{t_{k+1}} = s_{k+1}^j | \hat{S}_{t_{k}} =
s_{k}^i)
$$
is the {\em quantized transition probability} between times $t_k$
and $t_{k+1}$. The whole set of quantization grids equipped with the
transition matrices make up the so-called ``quantization tree".  The
transition weights $(\pi_k^{ij})$ matrices are the second quantity
needed to process the quantized dynamic programming
principle~(\ref{Eq:QuantizedPgmDyn} or
\ref{Eq:QuantizedPgmDynfirm}). A specific fast parallel
 quantization procedure has been developed in our multi-factor Gaussian framework to speed up (and parallelize)
the computation of these weights (see Annex). In a more general
framework, one can follow the usual Monte Carlo approach described
in~\cite{bernouilli} to compute the grids and the transition of the
global quantization tree.

When $(S_{t_k})$ has no longer a Markov dynamics but appears as a
function of a Markov chain $(X_k)$: $S_{t_k}=f(X_k)$, one can
proceed as above, except that one has to quantize $(X_k)$. Then the
dimension of the problem (in term of quantization) is that of the
structure process $(X_k)$.

Of course one can always implement the above procedure formally
regardless of the intrinsic dynamics of $(S_{t_k})$. This yields to
a drastic dimension reduction (from that of $X$ downto that of
$(S_{t_k})$). Doing so, we cannot apply the convergence theorem (see
Section \ref{Section:Cvg}) which says that in a Markovian framework
the premium resulting from \eqref{Eq:QuantizedPgmDyn} or
\eqref{Eq:QuantizedPgmDynfirm} will converge toward the true one as
the size of the quantization grid goes to infinity.

This introduces a methodological residual error that can be compared
to that appearing in \cite{BAMA} algorithm for American option
pricing. However, one checks on simulations that this residual error
turns out to be often negligible (see~Section~\ref{dimred}).

\subsection{Complexity}

The first part of the algorithm consists in designing the
quantization tree and the corresponding weights. The complexity of
this step is directly connected to the size of the quantization
grids chosen for the transitions computation in 1-dimension, or to
the number of Monte Carlo simulations otherwise. However those
probabilities have to be calculated once for a given price model,
and then several contracts can be priced on the same quantization
tree. So we will mainly focus on the complexity of the pricing part.

We consider a penalized normalized contract, $i.e$ $q_{min}=0$ and
$q_{max}=1$. The implementation of the dynamic programming principle
requires three interlinked loops. For each time step $k$ (going
backward from $n$ to $0$), one needs to compute for all the points
$s_k^i$, $i=1,\ldots,N_k$ of the grid and for every possible
cumulative consumption $Q_{t_k}^\ell$ ($0\leq \ell \leq k$) (see
(\ref{Eq:CumulativeVolume})) the functional
$$
\max_{q \in \{0,1\}}{[q(s_k^i-K) +
\mathbb{E}(P(t_{k+1},\hat{S}_{t_{k+1}},Q_{t_k}+q)|\hat{S}_{t_k}=s_{k}^i)]
}
$$
which means computing twice a sum of $N_{k+1}$ terms.

Hence, the complexity is proportional to
$$
\sum_{k=0}^{n-1}(k+1)N_kN_{k+1}.
$$
In the case where all layers in the quantization tree have the same
size, $i.e$ $N=N_k,\forall k=1,\dots,n$, the complexity is
proportional to $\frac{n^2N^2}{2}$. This is not an optimal design
but {\em only one grid needs to be stored}. It is possible to reduce
the algorithm complexity by optimizing the grid sizes $N_i$
\footnote{To minimize the complexity, set $N_k \approx
\frac{2nN}{(k+1)log(n)},\,k=0,\dots,n-1$, which leads to a global
complexity proportional to $\frac{4n^2N^2}{\log(n)}$}(with the
constraint $\sum_kN_k = nN$), but it costs more memory space.\\

In the case of firm constraints, the dynamic programming principle
(\ref{Eq:QuantizedPgmDynfirm}) has to be computed for every
admissible cumulative consumption, $i.e$ for every $Q_{t_k}^\ell$
($0\leq \ell \leq
\min(k-1,Q_{\max})-\left(Q_{\min}-(n-k+1)\right)_+$, see
(\ref{Eq:CumulativeVolumeFirm})). The complexity is proportional to
$$
\sum_{k=0}^{n-1}(\min(k,Q_{\max})-\left(Q_{\min}-(n-k+1)\right)_++1)N_kN_{k+1}.
$$

The complexity in the case of firm constraints is lower than the one
for a penalized problem, and depends on the global constraints
$\left(Q_{\min},Q_{\max}\right)$. But the implementation is easier
in the case of a penalized problem, because one does not need to
check if the cumulative consumption volume is admissible. Both
approaches have been numerically tested and results are
indistinguishable for large enough penalties. For the implementation
readiness, the approach with penalties has been adopted.

In order to reduce the complexity of the algorithm, one usually
prunes the quantization tree. In most examples, at each layer $k$,
many terms of the transition matrix $(\pi_{ij}^k)_{i,j}$ are equal
to 0 or negligible. So while the transition probabilities are
estimated, all the transitions that are not visited are deleted.
This step is important because it allows to reduce significantly the
algorithm complexity.

In practice we can even neglect transitions whose probability is
very low, say less than $10^{-5}$.

\subsection{Convergence}
\label{Section:Cvg} In \cite{swingquantif} is proved an error bound
for the pricing of swing options by optimal quantization.

Let $P^n_0(Q)$ denote the price of the swing contract at time $0$.
$n$ is the number of time step, and $Q=(Q_{min},Q_{max})$ is the
global constraint. We consider a contract with normalized local
constraints, $i.e$ $q_{min}=0$ and $q_{max}=1$. The ``quantized"
price $\hat{P}^n_0(Q)$ is the approximation of the price obtained
using optimal quantization.

\begin{prop}
Assume there is a real exponent $p \in [1,+\infty)$ such that the
($d$-dimensional) Markov structure process $(X_k)_{0 \leq k \leq
n-1}$ satisfies
$$
\max_{0 \leq k \leq n-1} |X_k| \in
L^{p+\eta}(\mathbb{P}), \eta > 0.
$$
At each time $k \in \{0,\dots, n -
1\}$, we implement a (quadratic) optimal quantization grid $x^N$ of
size $N$ of $X_k$. Then
$$\parallel \sup_{Q \in T^+(n)} |P^n_0(Q) - \hat{P}^n_0(Q) |\parallel _p \leq
C\frac{n}{N^{\frac{1}{d}}}$$ where $T^+(n):= \{(u, v), 0 \leq u \leq
v \leq n\}$ is the set of admissible global constraints (at time
$0$).
\end{prop}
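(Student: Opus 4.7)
My plan is to establish the bound by a backward induction on the dynamic programming recursion \eqref{Eq:QuantizedPgmDyn}, controlling the propagation of the one-step quantization error.

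First, I would reduce the supremum over $Q\!\in T^+(n)$ to a finite supremum. By the piecewise-affine/concave structure established in the proposition preceding the theorem on bang-bang consumption, both $Q\mapsto P^n_0(Q)$ and $Q\mapsto \hat P^n_0(Q)$ are piecewise affine on the same elementary triangulation of the simplex whose vertices are the $O(n^2)$ integer points of $T^+(n)$. Consequently the absolute difference attains its supremum at a lattice point $(Q_{\min},Q_{\max})\!\in \mathbb{N}^2$, and it is enough to bound $|P^n_0(Q)-\hat P^n_0(Q)|$ uniformly over this integer lattice. At such integer constraints the bang-bang theorem applies and the optimization in the DP reduces to $q\!\in\{0,1\}$, so the admissible cumulated volumes $\hat Q_{t_k}$ lie in a finite set of cardinality at most $n+1$.

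Next, for a fixed lattice $(Q_{\min},Q_{\max})$ I would set up the backward induction by defining, at each layer $k$, the cumulative $L^p$-error
\[
e_k := \Bigl\|\max_{\hat Q_{t_k}}\bigl|P(t_k,S_{t_k},\hat Q_{t_k})-\hat P(t_k,\hat S_{t_k},\hat Q_{t_k})\bigr|\Bigr\|_p,
\]
where $\hat S_{t_k}=f_k(\hat X_k)$ in the abstract framework. Writing $P(t_k,\cdot)$ and $\hat P(t_k,\cdot)$ with their respective DP equations and using the elementary bound $|\max_q a_q-\max_q b_q|\le \max_q |a_q-b_q|$, the recursion splits into three contributions: (i) the local payoff mismatch $q|S_{t_k}-\hat S_{t_k}|$, (ii) the true-versus-quantized \emph{conditional-expectation} error $\|\mathbb{E}(P(t_{k+1},S_{t_{k+1}},\cdot)|S_{t_k})-\mathbb{E}(P(t_{k+1},\hat S_{t_{k+1}},\cdot)|\hat S_{t_k})\|_p$, and (iii) the propagated error $e_{k+1}$ arising from replacing $P(t_{k+1},\hat S_{t_{k+1}},\cdot)$ by $\hat P(t_{k+1},\hat S_{t_{k+1}},\cdot)$. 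Step (iii) is $\le e_{k+1}$, since conditional expectation is an $L^p$-contraction.

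The core estimate is then (i)+(ii). For (i) I would appeal to Zador's theorem applied to the grid used at step $k$, which gives $\|X_k-\hat X_k\|_p = O(N^{-1/d})$ (one uses the $L^{p+\eta}$ moment assumption to upgrade from quadratic to $L^p$ distortion bounds via Pierce's lemma). For (ii), which is the main technical obstacle, I would first verify inductively that $(s,Q)\mapsto P(t_{k+1},s,Q)$ is Lipschitz in $s$ with a constant independent of $k$ and $Q$ (this follows from the affine structure of payoffs and a standard backward Lipschitz-propagation argument, possibly after a mild moment/growth bound on $S$); then, combining Lipschitz continuity with the stationarity property of optimal quantizers $\mathbb{E}(X_k|\hat X_k)=\hat X_k$ and the tower property, one obtains
\[
\text{(ii)} \;\le\; C\bigl(\|X_{k+1}-\hat X_{k+1}\|_p + \|X_k-\hat X_k\|_p\bigr)\;=\;O(N^{-1/d}),
\]
uniformly in $\hat Q_{t_k}$ (since the Lipschitz constant is $Q$-free).

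Finally, I would telescope the recursion $e_k \le e_{k+1}+CN^{-1/d}$ from $k=n-1$ down to $k=0$, starting from $e_n\le \|P_T(S_T,\cdot)-P_T(\hat S_T,\cdot)\|_p = O(N^{-1/d})$ (using the polynomial growth hypothesis on $P_T$ together with the $L^{p+\eta}$ moment of $S_T$). Summing the $n+1$ contributions yields $e_0 = O(n\,N^{-1/d})$, and since this bound is uniform over the $O(n^2)$ integer vertices it gives the announced estimate $\|\sup_{Q\in T^+(n)} |P^n_0(Q)-\hat P^n_0(Q)|\|_p\le Cn/N^{1/d}$. The delicate point throughout is keeping the Lipschitz constants in (ii) independent of $n$, $k$, and $Q$, so as to avoid an additional factor of $n$ in the final rate.
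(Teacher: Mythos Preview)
The paper does not actually prove this proposition: Section~3.3 merely states the result and refers the reader to the companion paper~\cite{swingquantif} for the proof. So there is no argument here against which to compare your sketch directly.

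That said, your outline is the standard Bally--Pag\`es backward-induction error analysis (cf.~\cite{bernouilli,spa}) specialised to the swing dynamic programme, and this is almost certainly the route taken in~\cite{swingquantif}. The overall architecture---reduce to integer constraints via the piecewise-affine structure, split the one-step error into local payoff, conditional-expectation mismatch and propagated term, invoke Zador/Pierce for the $O(N^{-1/d})$ layerwise quantization error, then telescope over $n$ steps---is correct in spirit.

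Two points in your sketch are genuine gaps rather than mere details. First, you assert that $Q\mapsto \hat P^n_0(Q)$ is piecewise affine on the \emph{same} triangulation as the true premium; the paper only establishes this for $P^n_0$, and while the argument should carry over to the quantized dynamic programme (same bang-bang structure), you must actually check it. Second, and more seriously, the claim that the Lipschitz constant of $s\mapsto P(t_{k+1},s,Q)$ stays bounded in $n$ is not a ``standard backward Lipschitz-propagation'' fact: a na\"ive induction gives a constant of order $n-k$ (the value function sums $n-k$ payoffs), which would yield $O(n^2/N^{1/d})$ after telescoping. Obtaining the announced $O(n/N^{1/d})$ requires the sharper Bally--Pag\`es decomposition in which the error at step~$k$ is controlled by the Lipschitz constant of the \emph{one-step} payoff $q(s-K)$ (which is~$1$) rather than of the full value function, via the specific projection/filtration argument in~\cite{spa}. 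You flag this as ``the delicate point'' but do not indicate how to resolve it; that is precisely where the work lies.
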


In fact this error bound turns out to be conservative and several
numerical experiments, as those presented in Section
\ref{Section:Numerical}, suggest that in fact the true rate (for a
fixed number $n$ of purchase instants) behaves like
$O(N^{-\frac{2}{d}})$.

\section{Numerical experiments}
\label{Section:Numerical} In this section the same grid size has
been used at each time step, $i.e.$ we always have $N_k = N,
k=0,\dots,n$. The results have been obtained by implementing the
penalized problem and using the canonical decomposition (see
Section~\ref{Section:DecompSwing}).

\subsection{The one factor model}
Swing options are often priced using the least squares regression
method ``\`a la Longstaff-Schwartz" \cite{ls}. This section aims to
compare our  numerical results to those obtained with
Longstaff-Schwartz method. We consider a one factor model, which
corresponds to a one dimensional Markov  structure process.

\subsubsection{Quantization tree for a one dimensional structure process}
\label{Section:1d} We consider the following diffusion model for the
forward contracts $(F_{t,T})_{0 \leq t \leq T}$:
$$\frac{dF_{t,T}}{F_{t,T}} = \sigma e^{-\alpha (T-t)}dW_t$$
where $W$ is a standard Brownian motion. It yields:
$$
S_t = F_{0,t}\exp{\left(\sigma
\int_0^t{e^{-\alpha(t-s)}dW_s} - \frac{1}{2}\Lambda_t^2\right)}
$$
where
$$
\Lambda_t^2 = \frac{\sigma^2}{2\alpha}(1-e^{-2\alpha t}).
$$
Denote $X_k = \int_0^{k \Delta}{e^{-\alpha(t-s)}}dW_s$. The
structure process $(X_k)_{k\geq 0}$ can be quantized using the fast
parallel quantization method described in the Annex (page
\pageref{Section:FPQ}). Let $x_k^{(N)}$ denote an (optimal)
quantization grid of $X_k$ of size $N$. We have to compute for every
$ k \!\in \{0,\dots,n-1\}$, and every $(i,j) \!\in \{1,\dots,N\}^2$,
the following (quantized transition) probabilities:
\begin{equation}p^{ij}_k=\mathbb{P}(\eta_1 \in
C_i(x_k^{(N)});\alpha_{k+1}\eta_1 + \beta_{k+1}\eta_2  \in
C_j(x_{k+1}^{(N)}))\label{Eq:Proba1d}\end{equation} where
$(\eta_1,\eta_2) \sim \mathcal{N}(0,I_2)$, and $\alpha_k$ and
$\beta_k$ are  scalar coefficients that can be explicited.

This can be done by using quantization again and importance sampling
as presented in the Annex, page \pageref{Section:AR1}
(see~(\ref{Eq:ImportanceSampling1D})).

\subsubsection{Comparison with the regression method}
We first use the following parameters for the one factor model:

\medskip
\centerline{$\sigma=70\%$, $\alpha=4$, $F_{0,t_k}=20,\;k=0,\dots,n$.}

\medskip
The following tables present the results obtained with Longstaff
Schwartz and optimal quantization, for different strike values.
$1000$ Monte-Carlo sample paths have been used for
Longstaff-Schwartz method and the confidence interval of the Monte
Carlo estimate is given in the table. A $100$-point grid has been
used to quantize the spot price process, and the transitions have
been computed with a $500$-point grid. The local volume constraints
$q_{min}$ and $q_{max}$ are set to $0$ and $6$  respectively .

We first consider a case without constraints (Table \ref{Tab:LS1}),
which means that the swing option is a strip of calls, whose price
is easily computed with the Black-Scholes formula.

\begin{center}
\begin{table}[h!]
\centering
\begin{tabular}{|l|c|c|c|c|}
    \hline
   & $K=5$ & $K=10$ & $K=15$ & $K=20$ \\
   \hline
  Longstaff-Schwartz &[32424,33154]  & [21360,22127]  & [11110,11824] & [3653,4109]\\
 \hline
  10 point grid & 32726 & 21806 & 11311 & 3905 \\
  20 point grid & 32751 & 21834 & 11367 & 3943 \\
  50 point grid & 32759 & 21843 & 11380 & 3964 \\
  100 point grid & 32759 & 21843 & 11380 & 3964 \\
  200 point grid &  32761 & 21845 & 11382 & 3967  \\
 \hline
  Theoretical price & 32760 & 21844 & 11381& 3966 \\
  \hline
\end{tabular}
\caption{\em  \label{Tab:LS1}Comparison for a call strip (no global
constraints)}
\end{table}
\end{center}

Table \ref{Tab:LS2} presents the results obtained with the global
constraints $Q_{min}=1300$ and $Q_{max}=1900$. Volume constraints
are presented on Figure~\ref{Fig:Volume}.

\begin{center}
\begin{figure}[h!]
\begin{center}
  \includegraphics[width=8.5cm]{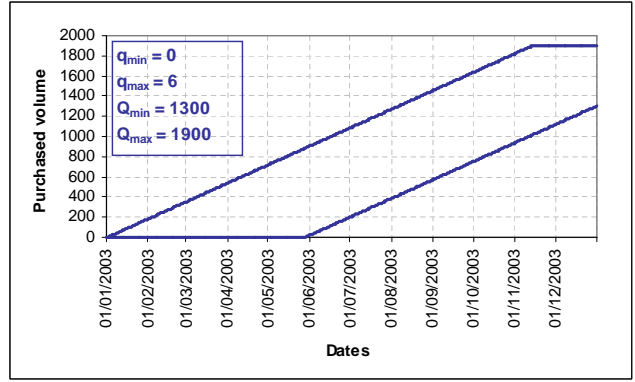}\\
   \caption{\em \label{Fig:Volume}Consumption constraints}
  \end{center}
\end{figure}
\end{center}

\begin{center}
\begin{table}[h!]
\begin{center}
\begin{tabular}{|l|c|c|c|c|}
    \hline
   & $K=5$ & $K=10$ & $K=15$ & $K=20$ \\
   \hline
 Longstaff-Schwartz & [29068,;29758] & [19318,;19993] & [10265,;10892] & [2482,;3038]\\
 \hline
  10 point grid & 29696 & 20216 & 10981& 3067 \\
  20 point grid & 29494& 20018 & 10841 & 2863 \\
  50 point grid & 29372 & 19895 & 10729 & 2718 \\
  100 point grid & 29348 & 19872 & 10704 & 2687 \\
  200 point grid & 29342 & 19866 & 10698 & 2680  \\
  \hline
\end{tabular}
\end{center}
\caption{\em \label{Tab:LS2}Comparison with constraints}
\end{table}
\end{center}

The results seem consistent for both methods, the price given by
quantization always belongs to the confidence interval of the
Longstaff-Schwartz method. One can note that it is true even for
small grids, which means that quantization gives quickly a good
price approximation. Moreover, the price given by quantization is
very close of the theoretical price in the case of a call strip.

\FloatBarrier

\subsubsection{Execution time}
\label{Section:Time} In this section are compared the execution
times to price swing options using optimal quantization and
Longstaff-Schwartz method.

The size of the quantization grid is $100$ for the pricing part and
$200$ for the transitions computation. And $1000$ Monte Carlo
simulations are used. The maturity of the contract is one year.\\

The computer that has been used has the following characteristics:

\medskip
\centerline{Processor: Celeron; CPU 2,4 Ghz; 1,5 Go of RAM; Microsoft Windows 2000.}

\medskip
The execution times given in Table~\ref{Tab:1contrat} concern
the pricing of one contract, which yields the building of the
quantization tree and the pricing using dynamic programming for
quantization.
\\
\begin{center}
\begin{table}[h!]
\centering
\begin{tabular}{|c|c|c|}
  \hline
  Longstaff-Schwartz & Quantization:& Quantization:\\
  & Quantization tree building + Pricing & Pricing only\\
  \hline
  160 s & 65 s & 5 s\\
  \hline
\end{tabular}
\caption{\em \label{Tab:1contrat}Execution time for the pricing of
one contract}
\end{table}
\end{center}

\FloatBarrier
If we consider the pricing of several contracts, there is no need
for re computing the quantization tree if the underlying price model
has not changed. That is why quantization is really faster than
Longstaff-Schwartz in this case, as one can note from the results
presented in Table~\ref{Tab:10contrats}.

\begin{center}
\begin{table}[h!]
\centering
\begin{tabular}{|c|c|}
  \hline
  Longstaff-Schwartz & Quantization \\
  \hline
  1600 s & 110 s \\
  \hline
\end{tabular}
 \caption{\em \label{Tab:10contrats}Execution time for the pricing of 10 contracts}
\end{table}
\end{center}

\subsubsection{Sensitivity Analysis}
When contracts such as swing options ought to be signed,
negotiations usually concern the volume constraints. That is why the
valuation technique has to be very sensitive and coherent to
constraints variation. In this section we will compare the
sensibility to global constraints for Longstaff-Schwartz method and
optimal quantization.

Figure~\ref{Fig:Sensibility2constraints} represents the price of the
contract with regards to the global constraints $Q_{min}$ and
$Q_{max}$, and Figure~\ref{Fig:Coupe} represents the price versus
$Q_{max}$ for a fixed value of $Q_{min}$ equal to $1300$.

\begin{center}
\begin{figure}[h!]
\begin{center}
  \includegraphics[width=10cm]{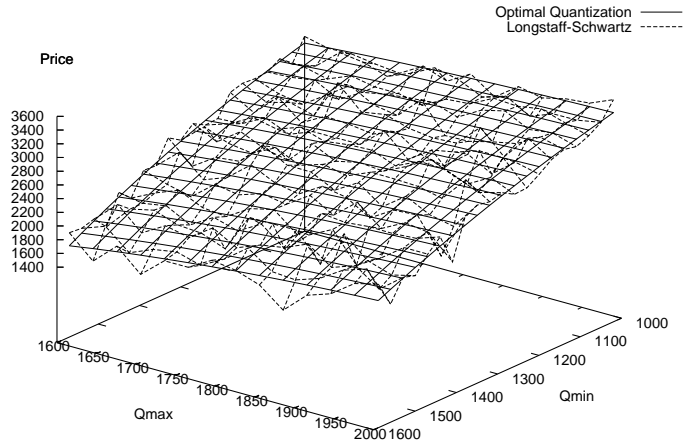}\\
   \caption{\em \label{Fig:Sensibility2constraints}Sensitivity to global constraints}
  \end{center}
\end{figure}
\end{center}

\begin{center}
\begin{figure}[h!]
\begin{center}
  \includegraphics[width=10cm]{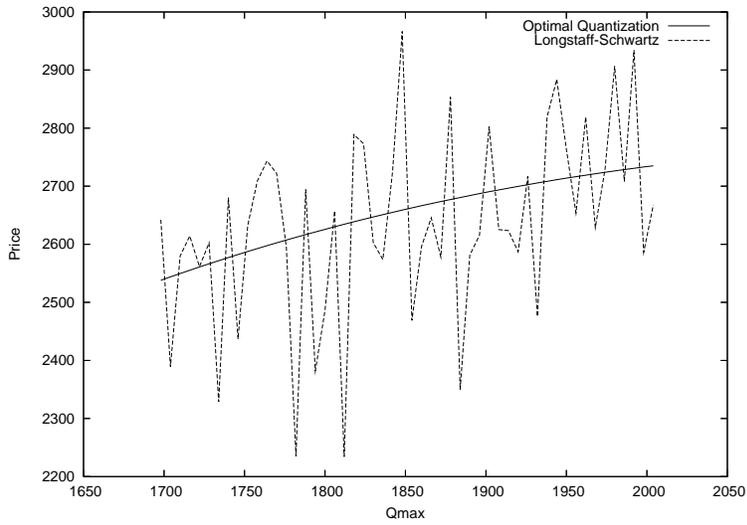}\\
   \caption{\em \label{Fig:Coupe}Sensibility to $Q_{max}$}
  \end{center}
\end{figure}
\end{center}

One can notice that the surface obtained with optimal quantization
is very smooth. If $Q_{max}$ increases, the price increases.
However, it is not always true with Longstaff-Schwartz because of
the randomness of the method, and the limited number of Monte Carlo
simulations imposed by the dimension of the problem and the number
of time steps.

New Monte Carlo simulations are done for each different contract,
$i.e$ each time $Q_{min}$ or $Q_{max}$ varies. Of course, the same
simulations could be used to price all the contracts, but
unfortunately these simulations could be concentrated in the
distribution queues and give a price far from the real one for all
the contracts. As concerns quantization, the grid is build in order
to give a good representation of the considered random variable. One
of the great advantages of optimal quantization over Monte-Carlo is
that this first algorithm always approximates the whole distribution
of the payoff meanwhile it can take a while before Monte-Carlo
explores some parts of it.

\FloatBarrier

\subsubsection{Convergence}
\label{Section:Cvg1F} In this section we will study the convergence
of the quantization method. We focus on the convergence of the
pricing
part of the algorithm.\\

We consider a one year maturity contract with the volume constraints
depicted on Figure~\ref{Fig:Volume}, and the daily forward curve
depicted on Figure~\ref{Fig:Fwd}.

\begin{center}
\begin{figure}[h!]
\begin{center}
  \includegraphics[width=8.5cm]{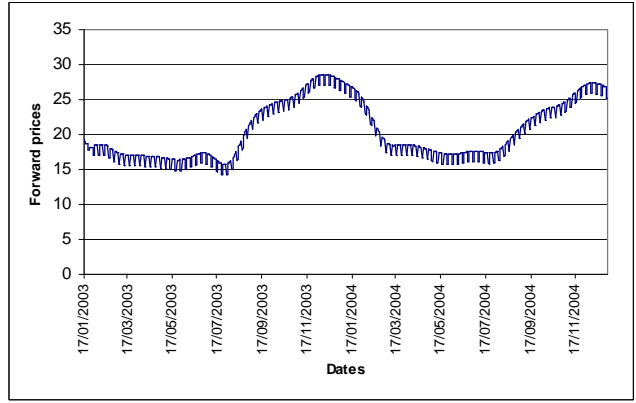}\\
   \caption{\em \label{Fig:Fwd}Daily forward curve}
  \end{center}
\end{figure}
\end{center}

Let $P(N)$ be the price obtained for a quantization grid of size
$N$, the error has been computed as $|P(N)-P(400)|,N \leq 400$. We
assume that the error can be written as a functional of the grid
size $N$ with the following shape:
 $$
N \mapsto \frac{C}{N^{\alpha}}.
$$
A linear regression in a logarithmic scale is done to find the
functional that best fits the empirical error. The $\alpha$
coefficient obtained is $1.96$.\\

Figures \ref{Fig:Cvg1F} and \ref{Fig:Logscale} show the obtained
numerical convergence and the corresponding fitted functional.
\begin{center}
\begin{figure}[h!]
\begin{center}
  \includegraphics[width=10cm]{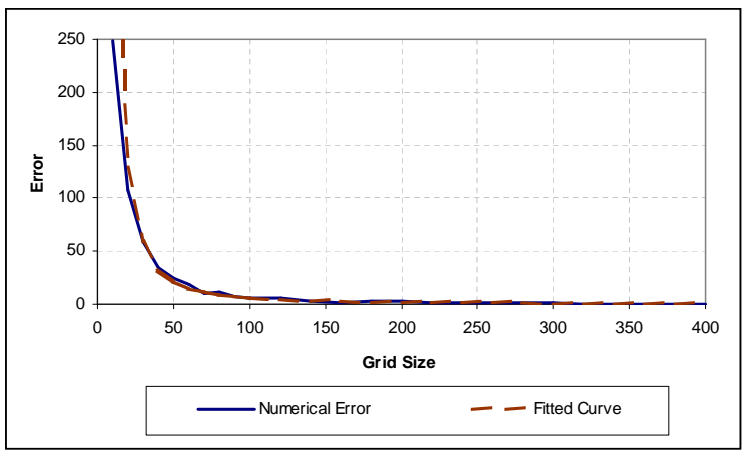}\\
   \caption{\em \label{Fig:Cvg1F}Numerical Convergence}
  \end{center}
\end{figure}
\end{center}

\begin{center}
\begin{figure}[h!]
\begin{center}
  \includegraphics[width=10cm]{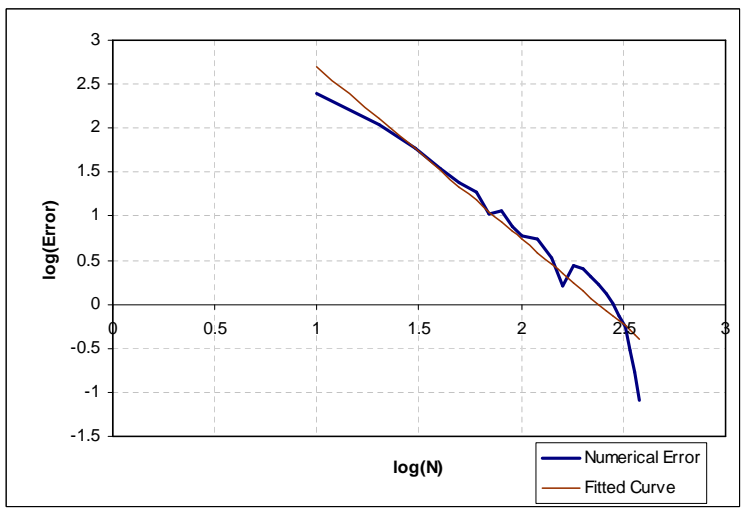}\\
   \caption{\em \label{Fig:Logscale}Numerical Convergence (logarithmic scale)}
  \end{center}
\end{figure}
\end{center}

\FloatBarrier
The same experiments have been done for other
contracts, results are presented in Table \ref{Tab:Cvg1F}. $q_{min}$
and $q_{max}$ are set to $0$ and $6$.

\begin{center}
\begin{table}[h!]
\begin{tabular}{|c|c|c|c|}
  \hline
  % after \\: \hline or \cline{col1-col2} \cline{col3-col4} \dots
  Forward Curve & Strike & Constraints ($Q_{min}-Q_{max}$)& Estimated $\alpha$ \\
  \hline
  Figure~\ref{Fig:Fwd} & 20 & 1300-1900 & 1.96 \\
  Flat (20) & 20 & 1300-1900 & 2.07 \\
   Flat (20) & 10 & 1300-1900 & 2.32 \\
  Flat (20) & 20 & 1000-2000 &  1.95\\
  Flat (20)  & 20 & 1600-1800 &  2.26\\
  \hline
\end{tabular}
\caption{\em \label{Tab:Cvg1F}Estimation of the convergence rate}
\end{table}
\end{center}

\FloatBarrier We can conclude that the convergence rate of the
quantization algorithm for pricing swing options is close to
$O(\frac{1}{N^{2}})$. This convergence rate is much better than
Monte-Carlo, and leads to think that optimal quantization is an
efficient alternative to Longstaff-Schwartz method for this problem.

\subsection{Two factor model}
\label{Section:2d} We consider the following diffusion model for the
forward contracts $(F_{t,T})_{0 \leq t \leq T}$:
$$
\frac{dF_{t,T}}{F_{t,T}} = \sigma_1 e^{-\alpha_1 (T-t)}dW^1_t +
\sigma_2 e^{-\alpha_2 (T-t)}dW^2_t
$$
where $W^1$ and $W^2$ are two Brownian motions with correlation
coefficient $\rho$.

Standard computations based on It\^o formula yield
$$
S_t = F_{0,t}\exp{\left(\sigma_1 \int_0^t{e^{-\alpha_1(t-s)}dW^1_s}
+ \sigma_2 \int_0^t{e^{-\alpha_2(t-s)}dW^2_s}-
\frac{1}{2}\Lambda_t^2\right)}
$$
where
$$\Lambda_t^2 = \frac{\sigma_1^2}{2\alpha_1}(1-e^{-2\alpha_1 t}) + \frac{\sigma_2^2}{2\alpha_2}(1-e^{-2\alpha_2 t})
+2\rho\frac{\sigma_1
\sigma_2}{\alpha_1+\alpha_2}(1-e^{-(\alpha_1+\alpha_2) t}).
$$

Unlike the one factor model, the spot price process obtained from
the two factor model is not a Markov process. Hence the dynamic
programming equation (\ref{Eq:PgmDyn}) cannot be used directly.
However, the structure process of the two factor model (See Annex,
page \pageref{Section:2factor})
$$X_t = \left(\int_0^t{e^{-\alpha_1(t-s)}dW^1_s},\int_0^t{e^{-\alpha_2(t-s)}dW^2_s}  \right)$$
is a Markov process, and $S_t=f(X_t)$ where $f:\mathbb{R}^2 \mapsto
\mathbb{R}$ is a continuous function. So we can rewrite the dynamic
programming equation as follows:
\begin{equation}
\left\{
  \begin{array}{l}
    P(t_k,X_{t_k},Q_{t_k})=\max_{q \in [q_{min},q_{max}]}{\{q(f(X_{t_k})-K) +
\mathbb{E}(P(t_{k+1},X_{t_{k+1}},Q_{t_k}+q)|X_{t_k})\}}, \\
\ \\
    P(T,X_T,Q_T)= P_{_T}(f(X_T),Q_T).
  \end{array}
\right.
\end{equation}

Then we need to quantize the $\mathbb{R}^2$ valued structure process
$(X_k)_{k \geq 0}$. This can be done using the Fast Parallel
Quantization (See Annex). The transitions are computed using Monte
Carlo simulations and importance sampling
(see~(\ref{Eq:ImportanceSampling})).

\subsubsection{Call strip}
We first consider a case without constraints, and compare the
results with the theoretical price of the call strip, for several
values of the strike $K$. The maturity of the contract is one month.
Transitions have been computed with $3 000 000$ of Monte-Carlo
simulation. The parameters of the two factor model are:
\begin{equation}\label{Tab:Param}
  \sigma_1=36\%, \alpha_1=0.21, \sigma_2=111\%, \alpha_2=5.4, \rho=-0.11.
\end{equation}
 Table~\ref{Tab:Callstrip} presents the results obtained for a strip
of call. Even if the quantized process is taking values in
$\mathbb{R}^2$, prices are close to the theoretical price even for
small grids.
\begin{center}
\begin{table}[h!]
\begin{center}
\begin{tabular}{|c|c|c|c|c|}
    \hline
   & $K=5$ & $K=10$ & $K=15$ & $K=20$ \\
   \hline
 Theoretical Price & 2700 & 1800.21 & 924.46 & 268.59\\
 \hline
  50 point grid & 2695.26 & 1795.26 & 918.18& 261.06 \\
  100 point grid & 2699.79& 1799.89 & 923.64 & 267.17 \\
  200 point grid & 2697.67 & 1797.83 & 921.94 & 266.71 \\
  300 point grid & 2702.02 & 1800.16 & 924.40 & 268.52 \\
  \hline
\end{tabular}
\end{center}
\caption{\em \label{Tab:Callstrip}Call Strip}
\end{table}
\end{center}

\subsubsection{Convergence}
\label{Section:Cvg2F}
%In this section we will study the convergence
%of the quantization method. We focus on the convergence of the
%pricing
%part of the algorithm.\\
%
We use the same procedure as in section \ref{Section:Cvg1F} to find
the functional $N \mapsto \frac{C}{N^{\alpha}}$ that best fits the
empirical error.

Figure~\ref{Fig:CvgMarkov} shows an example of the empirical error
and table \ref{Tab:Cvg2F} gather the values of $\alpha$ obtained for
different contracts. The contract maturity has been set to one
month.

\begin{center}
\begin{figure}[h!]
\begin{center}
  \includegraphics[width=10cm]{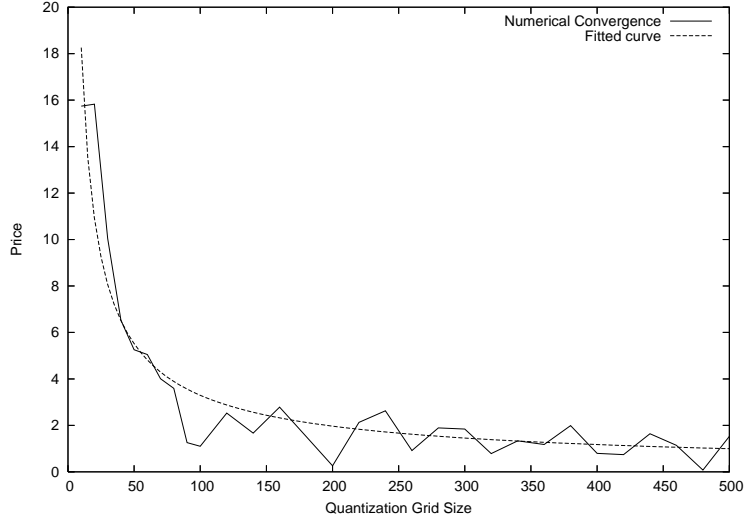}\\
   \caption{\em \label{Fig:CvgMarkov}Numerical Convergence}
  \end{center}
\end{figure}
\end{center}

\begin{center}
\begin{table}[h!]
\begin{tabular}{|c|c|c|c|}
  \hline
  % after \\: \hline or \cline{col1-col2} \cline{col3-col4} \dots
  Forward Curve & Strike & Constraints ($Q_{min}-Q_{max}$)& Estimated $\alpha$ \\
  \hline
  Flat (20) & 10 & 80-140 & 1.26\\
  Flat (20) & 20 & 80-140 & 1.00\\
  Flat (20)  & 20 & 30-170 & 0.67\\
  Flat (20) & 20 & 100-120 & 1.19\\
  \hline
\end{tabular}
\caption{\em \label{Tab:Cvg2F}Estimation of the convergence rate}
\end{table}
\end{center}

The convergence of the quantization algorithm is close to
$O(\frac{1}{N})$. The convergence rate is linked to the dimension
$d$ of the structure process, and from the results obtained in
section \ref{Section:Cvg1F} and in this section, we can assume that
the convergence rate is close to $O(\frac{1}{N^{2/d}})$, which is
better than the error bound theoretically established in Section
\ref{Section:Cvg}.

\FloatBarrier

\subsection{Dimension reduction}\label{dimred}
In the case of multi-factorial models, we need to quantize the
structure process $(X_k)_k$ instead of the spot process
$S_{t_k}=f(X_k)$ in order to work with a Markov process (See section
\ref{Section:2d}). Only the one factor model is Markovian. That is
why quantization and Longstaff-Schwartz method have been compared
just for this model. Longstaff Schwartz method also requires a
Markov underlying process.

From an operational point of view, it is interesting to study the
results obtained by formally quantizing the spot process
$(S_{t_k})$, regardless to its dynamics, and using the approximation
$$
\mathbb{E}(X|\mathcal{F}_{t_k})\simeq \mathbb{E}(X|S_{t_k})
$$
for any random variable $X$, even if the spot price $(S_{t_k})$ is
not a Markov process. Similar approximation has already been
proposed by Barraquand-Martineau in~\cite{BAMA}. Numerical tests
have shown that the resulting prices remain very close to those
obtained by quantizing the structure process in the case of a two
factor model. Execution time and convergence rate are significantly
faster, and the quantization tree can be computed as presented in
the Annex page~\pageref{Section:AR1} using
Equation~(\ref{Eq:ImportanceSampling1D}).

Even if there is no theoretical evidence on the error, this approach
seems useful to get quick results. Table \ref{Tab:Comparison}
presents some results, parameters of the two factor model are those
of~(\ref{Tab:Param}) and volume constraints are represented on
Figure~\ref{Fig:Volume}.

\begin{center}
\begin{table}[h!]
\begin{center}
\begin{tabular}{|l|c|c|c|c|}
    \hline
   Quantized Process & $K=5$ & $K=10$ & $K=15$ & $K=20$ \\
   \hline
 Spot Price & 30823.38 & 21414.30 & 13021.87 & 5563.68\\
 \hline
Structure Process &  30705.78 & 21518.75& 13123.56 & 5722.89 \\
  \hline
\end{tabular}
\end{center}
\caption{\em \label{Tab:Comparison}Quantization of the spot price vs
Quantization of the bivariate structure process}
\end{table}
\end{center}

The convergence rate obtained in this case is always
$O(\frac{1}{N})$, which is consistent with the general rate
$O(\frac{1}{N^{2/d}})$, because the quantized process is
$\mathbb{R}$-valued.

Therefore, even if the spot process is not Markov, the quantization
method can be performed all the same way as if it were, with small damage in practice. This is of
course not a general consideration but rather an observation over
the considered problem. The dramatic increase in the computation
effort that can be gained from this observation can justify in this
case a lack of formal rigor.

\section*{Conclusion}

In this article, we have introduced an optimal quantization method
for the valuation of swing options. These options are of great
interest in numerous modeling issues of the energy markets and their
accurate pricing is a real challenge.\\
Our method has been compared to the famous Longstaff-Schwartz
algorithm and seems to perform much better on various examples. In
fact, the optimal quantization method shares the good properties of
the so called tree method but is not limited by the dimension of the
underlying. Moreover, specific theoretical results provide {\it a
priori} estimates on the error bound of the method. \\
Thus, optimal quantization methods suit very well to the valuation
of complex derivatives and further studies should be done in order
to extend the present results to other structured products arising
in the energy sector.

%\bibliographystyle{dcu}
%\bibliography{biblio2}

\begin{thebibliography}{0}
\bibitem{bernouilli}{\sc  Bally V., Pag\`es G.} (2001). A quantization algorithm for solving multi-dimensional optimal stopping problems, {\em Bernoulli}, {\bf 6}(9), 1-47.
\bibitem{spa}{\sc  Bally V., Pag\`es G.} (2003). Error analysis of the quantization algorithm for obstacle problems, Stochastic Processes \& Their Applications, {\bf 106}(1), 1-40.
\bibitem{mf}{\sc  Bally V., Pag\`es G., Printems J.} (2005). A quantization tree method for pricing and hedging
multi-dimensional American options, {\em Mathematical Finance}, {\bf 15}(1), 119-168.
\bibitem{swingquantif}{\sc Bardou O., Bouthemy S., Pag\`es G.} (2007).  Pricing  swing options using Optimal Quantization, pre-print LPMA-1141.
\bibitem{BAMA}{\sc Barraquand J. and Martineau D.},  (1995). Numerical valuation of high dimensional multivariate American
securities, {\em Journal of Financial and Quantitative Analysis} {\bf 30}{3}, 383-405.
\bibitem{swing} {\sc Barrera-Esteve C., Bergeret F., Dossal C., Gobet E., Meziou A., Munos R., Reboul-Salze D.} (2006).
Numerical methods for the pricing of swing options: a stochastic
control approach, {\em Methodology and Computing in Applied
Probability}.
\bibitem{carmona-ludkovski-07}{\sc Carmona R. and Ludkovski M.} (2007). Optimal Switching with Applications to Energy Tolling Agreements, pre-print.
\bibitem{carmona-ludkovski-06} {\sc Carmona R. and Ludkovski M.} (2005). Gas Storage and Supply Guarantees : An Optimal Switching Approach, pre-print.
%submitted to {\em Management Science}.
\bibitem{carmona-dayanik-06} {\sc Carmona R. and Dayanik S.} (2003). Optimal Multiple-Stopping of Linear Diffusions and Swing
Options, pre-print.
\bibitem{carmona-touzi-04} {\sc Carmona R. and Touzi N.}  (2004). Optimal multiple stopping and valuation of swing options, {\em Mathematical Finance}, to appear.
\bibitem{CLST}{\sc Clewlow L., Strickland C. and Kaminski V.} (2002).
Risk Analysis of Swing Contracts, {\em Energy and power risk
management}.
\bibitem{clearwater-huberman-06} {\sc Clearwater S. H. and
Huberman B. A.} (2005). Swing Options : a Mechanism for Pricing It
Peak Demand
\bibitem{eydeland-wolyniec-03} {\sc Eydeland A. and Wolyniec K. } (2003). {E}nergy and {P}ower {R}isk
{M}anagement, {\em Wiley Finance}
\bibitem{geman-05}{\sc Geman H.} (2005). {\em {C}ommodities and commodity derivatives - {M}odeling and {P}ricing for {A}griculturals, {M}etals and {E}nergy}, Wiley Finance.
\bibitem{graf} {\sc Graf S., Luschgy H.} (2000).  {\em Foundations of Quantization for Probability Distributions}. Lect. Notes in Math. 1730, Springer, Berlin.
\bibitem{jaillet-ronn-al-04}{\sc Jaillet  P.,  Ronn E.I.,  Tompaidis} (2004). Valuation of Commodity-Based Swing Options, {\em  Management Science}, {\bf 50}, 909-921.
\bibitem{keppo-02}{\sc Keppo J.} (2004). Pricing of Electricity Swing Options, {\em Journal of Derivatives}, {\bf 11}, 26-43.
\bibitem{lari-simchi-01}{\sc Lari-Lavassani A., Simchi M., Ware A.} (2001). A discrete valuation of Swing options, {\em Canadian Applied Mathematics Quarterly}, {\bf
9}(1), 35-74.
\bibitem{ls}{\sc F. Longstaff and E.S. Schwartz} (2001). Valuing American Options by Simulation: A Simple Least Squares Approach, {\em The Review of Financial Studies},
{\bf 14}, 113-147.
\bibitem{MEHA} {\sc Meinshausen N. and Hambly B.M.} (2004). Monte Carlo methods for the valuation of multiple exercise options, {\em Mathematical Finance} , {\bf 14},
$n^04$.
\bibitem{pages97}{\sc Pag\`es G.} (1998). A space vector quantization method for numerical integration, {\em J. Computational and Applied
Mathematics}, 89, 1-38.
\bibitem{handbook}{\sc Pag\`es G., Pham H., Printems J.} (2004).   Optimal
quantization methods and applications to numerical problems in finance, {\it Handbook of Numerical Methods in Finance}, ed. S.
Rachev, Birkhauser, 253-298.
\bibitem{website06} {\sc Pag\`es G. and Printems J.} (2005). {\tt www.quantize.maths-fi.fr}, website devoted to
quantization, {\em maths-fi.com}
\bibitem{ROZH} {\sc Ross S.M. and Zhu Z.} (2006). Structure of swing
contract value and optimal strategy, {\em The 2006 Stochastic
Modeling Symposium and Investment Seminar, Canadian Institute of Actuaries}
\bibitem{THO}  {\sc Thompson A.C.} (1995). Valuation of path-dependent contingent claims with multiple exercise decisions over time: the
case of Take or Pay. {\em Journal of Financial and Quantitative
Analysis}, {\bf 30}, 271-293.
\bibitem{winter-wilhelm-06} {\sc Winter C. and Wilhelm M.} (2006). Finite Element Valuation of Swing Options, pre-print.
\bibitem{ZAD1}  {\sc Zador P.L.} (1982). Asymptotic quantization error of continuous signals and the quantization dimension,   {\em IEEE
Trans. Inform.  Theory}, {\bf 28}, Special issue on quantization, A.
Gersho \& R.M. Grey eds, 139-149.

\end{thebibliography}

\bigskip

\bigskip
\centerline{\bf \large Annex: Fast Parallel quantization (FPQ)}
\label{Section:FPQ}

\medskip In this annex we propose an efficient method to
quantize a wide family of spot price dynamics $(S_{t_k})$. To be
precise we will assume that a time discretization $\Delta$ being
fixed,
\begin{equation}
S_{k \Delta} = f_k(X_k),k \geq 0 \label{Eq:Spot}
\end{equation}
where $(X_k)_{k \geq 0}$ is a $\mathbb{R}^m$-valued Gaussian
auto-regressive process and $(f_k)_{0 \leq k \leq n}$ a family of
continuous functions. The fast quantization method applies to the
Gaussian process $(X_k)_{0 \leq k \leq n}$. We will apply it to a
scalar two factor model in full details. As a conclusion to this
section we will sketch the approach to a multi-factor model.

\subsubsection*{Quantization of the Gaussian structure process}
\label{Section:AR1} We consider a centered Gaussian first order
auto-regressive process in $\mathbb{R}^m$: \begin{equation}X_{k+1}
= AX_k + T \varepsilon_{k+1} \label{Eq:Y}\end{equation} where $A \in
\mathcal{M}(m\times m,\mathbb{R})$, $T \in \mathcal{M}(m\times
m,\mathbb{R})$ lower triangular, and $(\varepsilon_k)$
i.i.d. with $\mathcal{N}(0,I_m)$ distribution.\\

Denote by $D(Z) = \left[ \mathbb{E}(Z_iZ_j) \right]_{1\leq i,j\leq
m}$ the covariance matrix of $Z$. We have $\forall k \in \mathbb{N}$:
$$
D(X_{k+1})  =  AD(X_k)A^* + TT^*.
$$
Denote $\Sigma_k$ the lower triangular matrix such that
$D(X_k)=\Sigma_k \Sigma_k^*$.

\medskip We consider for every $k=0,\ldots,n-1$, an {\em optimal (quadratic) quantizer} $x_k^{(N_k)}$  of size $N_k$,
for the $\mathcal{N}(0,I_m)$ distribution. The quantization grid of
the random variable $X_k$ is taken as a dilatation of
$x_k^{(N_k)}$,$i.e$
$$
\bar{x}_{k} = \Sigma_k \,x_k^{(N_k)} := (\Sigma_k x^{(N_k),i})_{1 \leq i \leq N_k}.
$$

To calculate the conditional expectations in the dynamic programming
equation, we need to get the following transition probabilities:
$$
\pi^{ij}_k=\mathbb{P}(X_{k+1} \in C_j(\bar{x}_{k+1}) | X_k \in
C_i(\bar{x}_{k}))
$$
where $C_i(x)$ denotes the $i$-th Voronoi cell
of the generic quantizer $x\!\in (\mathbb{R}^d)^N$. Then
$$
\mathbb{P}(X_k \in C_i(\bar{x}_{k})) = \mathbb{P}(Z \in
C_i(x_k^{(N_k)})),
$$
with $Z \sim \mathcal{N}(0,I_m)$. This probability
is provided as a companion parameter with the normal distribution grid files (available on~\cite{website06}).

To get the transition probability $\pi_k^{ij}$ we need to compute
$$
p^{ij}_k=\mathbb{P}(X_{k+1} \in C_j(\bar{x}_{k+1}), X_k \in
C_i(\bar{x}_{k})).
$$
\begin{prop} Let $X$ be a discrete time process described as
above. Let $U$, $V$ be two gaussian random variables
$\mathcal{N}(0,I_m)$. Then we have for every $k \in \{0,\dots,n-1\}$,
every $i \in \{1,\dots,N_k\}$, every $ j \in \{1,\dots,N_{k+1}\}$,
\begin{equation}\mathbb{P}(X_{k+1} \in C_j(\bar{x}_{k+1}) , X_k \in
C_i(\bar{x}_{k})) = \mathbb{P}(U \in C_i(x_k^{(N_k)}),A_{k+1}U +
B_{k+1}V \in C_j(x_{k+1}^{N_{k+1}}))\label{Eq:Transition}
\end{equation}
where $A_k$ and $B_k$ are $q\times q$ matrices whose coefficients depend
on $k$, and on the matrices $A$, $T$.
If $k=0$,
\begin{equation}
\mathbb{P}(X_{1} \in C_j(\bar{x}_{1})) = \mathbb{P}(V \in
C_j(x_1^{(N_1)})).
\end{equation}
\end{prop}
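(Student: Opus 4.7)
The plan is to reduce everything to a standard normal setting by exploiting that each $\bar{x}_k$ is the image of a $\mathcal{N}(0,I_m)$-grid under the square root $\Sigma_k$. Concretely, I would introduce the change of variables
$$
U_k := \Sigma_k^{-1} X_k,\qquad k\ge 0,
$$
so that $U_k\sim \mathcal{N}(0,I_m)$. Since $\bar{x}_k=\Sigma_k\,x_k^{(N_k)}$, the event $\{X_k\in C_i(\bar x_k)\}$ transports through $\Sigma_k^{-1}$ to $\{U_k\in C_i(x_k^{(N_k)})\}$ (using the convention adopted throughout the paper that the cells of the dilated grid are the images of the cells of the underlying standard-normal grid under $\Sigma_k$; in the one-dimensional case this is automatic because Voronoi cells are intervals and $\Sigma_k$ is a positive scalar).

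The second step is to exhibit the linear recursion for $U_k$. Substituting $X_k=\Sigma_k U_k$ and $X_{k+1}=\Sigma_{k+1}U_{k+1}$ into the defining equation $X_{k+1}=AX_k+T\varepsilon_{k+1}$ yields
$$
U_{k+1}=\Sigma_{k+1}^{-1}A\Sigma_k\,U_k+\Sigma_{k+1}^{-1}T\,\varepsilon_{k+1},
$$
which naturally leads to the definitions
$$
A_{k+1}:=\Sigma_{k+1}^{-1}A\Sigma_k,\qquad B_{k+1}:=\Sigma_{k+1}^{-1}T.
$$
A quick check confirms $A_{k+1}A_{k+1}^*+B_{k+1}B_{k+1}^*=\Sigma_{k+1}^{-1}(A\Sigma_k\Sigma_k^*A^*+TT^*)(\Sigma_{k+1}^{-1})^*=\Sigma_{k+1}^{-1}D(X_{k+1})(\Sigma_{k+1}^{-1})^*=I_m$, consistent with $U_{k+1}\sim\mathcal{N}(0,I_m)$.

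To conclude, set $U:=U_k$ and $V:=\varepsilon_{k+1}$. Both are $\mathcal{N}(0,I_m)$, and $V$ is independent of $U$ because $\varepsilon_{k+1}$ is independent of $\mathcal{F}_k^X\supset\sigma(U_k)$. Translating the two events through the change of variable in step one gives
$$
\{X_{k+1}\in C_j(\bar x_{k+1}),\,X_k\in C_i(\bar x_k)\}
=\{A_{k+1}U+B_{k+1}V\in C_j(x_{k+1}^{(N_{k+1})}),\,U\in C_i(x_k^{(N_k)})\},
$$
yielding the claimed identity. The case $k=0$ follows from $X_0=0$ (deterministic initial condition), which forces $X_1=T\varepsilon_1=\Sigma_1 V$ and hence $U_1=V$.

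The main conceptual step is the first one, namely the compatibility of Voronoi cells with the dilatation by $\Sigma_k$; once one commits to the dilatation convention for the quantizer of $X_k$, the rest of the argument is a bookkeeping exercise on the autoregressive recursion. The practical payoff is that both $A_{k+1}$ and $B_{k+1}$ depend only on $k$ and on the fixed matrices $A,T$, so the same family of $\mathcal{N}(0,I_m)$ grids can be recycled at every date, which is precisely what makes the quantization ``fast and parallel''.
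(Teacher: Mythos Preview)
Your argument is correct and mirrors the paper's proof: both reduce to standard normals via the factorization $X_k=\Sigma_k U_k$ (the paper phrases this as $(X_k,T\varepsilon_{k+1})\sim(\Sigma_k\eta_1,T\eta_2)$) and obtain the same matrices $A_{k+1}=\Sigma_{k+1}^{-1}A\Sigma_k$, $B_{k+1}=\Sigma_{k+1}^{-1}T$. Your explicit remark that $C_i(\bar x_k)=\Sigma_k\,C_i(x_k^{(N_k)})$ is a convention (automatic only when $\Sigma_k$ is scalar, as in the one-factor case) is a point the paper uses without comment.
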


\begin{proof} We have:
$$
\mathbb{P}(X_{k+1} \in C_j(\bar{x}_{k+1}); X_k \in C_i(\bar{x}_k))=
\mathbb{P}(AX_k + T\varepsilon_{k+1}\in C_j(\bar{x}_{k+1}) ; X_k \in
C_i(\bar{x}_k)).
$$
We consider the couple $(X_k, T\varepsilon_{k+1})$.
$T\varepsilon_{k+1}$ is independent of $X_k$. Let $\eta = (\eta_1,
\eta_2)$ a couple of independent Gaussian random vectors: $\eta_i \sim N(0,I_m),
i=1,2$. Then $\left( X_k,T\varepsilon_{k+1}\right)\sim\left(
\Sigma_k\eta_1,T\eta_2\right) $ and
\begin{eqnarray*}
\mathbb{P}(X_{k+1} \in C_j(\bar{x}_{k+1}); X_k \in C_i(\bar{x}_k)) &
= & \mathbb{P}(A\Sigma_k\eta_1 + T\eta_2\in C_j(\bar{x}_{k+1}) ;
\Sigma_k\eta_1 \in C_i(\bar{x}_k)) \\
& = & \mathbb{P}(\Sigma_{k+1}^{-1}\left(A\Sigma_k\eta_1 + T\eta_2
\right) \in C_j(x_{k+1}^{(N_{k+1})}) ;\eta_1 \in C_i(x_k^{(N_k)})).
\end{eqnarray*}
Setting
$$
A_{k+1} = \Sigma_{k+1}^{-1}A\Sigma_k, \quad  B_{k+1}= \Sigma_{k+1}^{-1}T
$$
we get
$$
\mathbb{P}(X_{k+1} \in C_j(\bar{x}_{k+1}); X_k \in C_i(\bar{x}_k))
=\mathbb{P}(A_{k+1}\eta_1 + B_{k+1}\eta_2 \in C_j(x_{k+1}^{(N_{k+1})}) ;
\eta_1 \in C_i(x_k^{(N_k)})).
$$

If $k=0$ and $\Sigma_0 \equiv 0$, the quantity
$$\mathbb{P}(X_{k} \in C_j(\bar{x}_{k})) = \mathbb{P}(\eta_1 \in
C_j(x_k^{(N_{k})}))$$ is given as a companion parameter with the
quantization grids of the normal distribution.
\end{proof}

\begin{rem}
Equation (\ref{Eq:Transition}) emphasizes the fact that the
transitions can be computed in parallel.
\end{rem}

\begin{rem}
To simplify the structure of the quantization tree we propose to
consider the same normalized grid of size $N_k = N$ at each step $k$
but other choices are possible like those recommended in
\cite{bernouilli}.
\end{rem}

\subsubsection*{Numerical methods}

Hereafter we will focus on the numerical computation of these transitions.

\medskip
\noindent $\bullet$ {\sc The standard Monte Carlo approach} The simplest way is to use a Monte Carlo method. One just
  needs to simulate couples of independent gaussian random
  variables $(\eta_1,\eta_2)$. This approach can be used
  whatever the dimension $m$ of the random variables $\eta_1$ and
  $\eta_2$ is.  It can clearly be parallelized as any MC simulation but fail to estimate the transition form states which are not often visited.

\medskip
\noindent $\bullet$ {\sc Fast  Parallel Quantization Method}  In order to improve the accuracy, especially for the points
  $x^{(N_{k+1}),j}$ of the grids $x_{k+1}^{(N_{k+1})}$ which are rarely
  reached by the paths starting from the cell of $x^{(N_{k}),i}$,
  it is possible to perform importance
  sampling.  The idea is to use Cameron-Martin formula to re-center the
  simulation: for every $k\!\in\{0,\ldots,n-1\}$ and every $i\!\in\{1,\ldots,N_k\}$,
\begin{multline}\label{Eq:ImportanceSampling}
p^{ij}_k=\mathbb{P}\left(\eta_1 \in C_i(x_k^{(N_k)});A_{k+1}\eta_1 +
B_{k+1}\tilde{\eta}_2  \in C_j(x_{k+1}^{(N_{k+1})})\right) = \\
e^{-\frac{1}{2}|x_k^{(N_k),i}|^2}\mathbb{E}\left(e^{-(x_k^{(N_k),i})^*
\eta_1}\mbox{\bf 1}_{\{\eta_1+x_k^{(N_k),i} \in
C_i(x_k^{(N_k)})\}}\mbox{\bf 1}_{\{ A_{k+1}(\eta_1+x_k^{(N_k),i})
+B_{k+1}\eta_2 \in C_j(x_{k+1}^{(N_{k+1})})\}} \right).
\end{multline}
Then  these expectations can be computed by Monte Carlo simulations, the transitions between the different times steps can be computed in parallel.

\medskip
\noindent $\bullet$ {\sc Quantized Parallel Quantization Method} If $m =1$, the transitions can be computed using again optimal
quantization, because in low dimension (say $d \leq 4$),
quantization converges faster than Monte Carlo method. In this case,
we have to compute a two dimensional expectation.

We estimate for every $k\!\in
\{0,\dots,n-1\}$,  every $i \!\in \{1,\dots,N_k\}$ and every $ j\, \in
\{1,\dots,N_{k+1}\}$  the following probabilities:
\begin{equation}p_k^{ij}=\mathbb{P}(\eta_1 \in
C_i(x_k^{(N_k)});\alpha_{k+1}\eta_1 + \beta_{k+1}\eta_2  \in
C_j(x_{k+1}^{(N_{k+1})}))\label{Eq:Proba1d}\end{equation} where
$(\eta_1,\eta_2) \sim \mathcal{N}(0,I_2)$, and $\alpha_k$ and
$\beta_k$ are scalar coefficients satisfying
$\alpha_k^2+\beta_k^2=1$:
\begin{eqnarray*}
\alpha_k & = \frac{\Sigma_kA}{\Sigma_{k+1}}& \\
\beta_k &= \frac{T}{\Sigma_{k+1}}.&
\end{eqnarray*}

To alleviate notations, we temporarily set $x=x_k^{(N_k)}$ and
$y=x_{k+1}^{(N_{k+1})}$.

We define $ \left[x^{i-\frac{1}{2}},x^{i+\frac{1}{2}} \right]=
\left[\frac{1}{2}(x^i + x^{i-1}), \frac{1}{2}(x^i + x^{i+1})\right]
= C_i(x)$ (the same shortcut is implicitly defined  for $y$).

In order to reduce the problem dimension, it is possible to write
the probability $p_{ij}^k$ as a double integral, and to integrate
first with respect to the second variable by using Fubini theorem:
\begin{multline*}\mathbb{P}(\eta_1 \in C_i(x);\;\alpha_{k+1}\eta_1 + \beta_{k+1}\eta_2
\in C_j(y)) = \\
\mathbb{E}\left(\mbox{\bf 1}_{\{x^{i-\frac{1}{2}}\leq \eta_1 \leq
x^{i+\frac{1}{2}}\}} \left(\mathcal{N}(\frac{y^{j-\frac{1}{2}} -
\alpha_{k+1}\eta_1}{\beta_{k+1}}) -
\mathcal{N}(\frac{y^{j+\frac{1}{2}} -
\alpha_{k+1}\eta_1}{\beta_{k+1}})\right)\right)
\end{multline*}
where $\mathcal{N}(x)$
is the distribution function of the normal distribution.

Importance sampling can again be used to improve the results
precision. Eventually we have to compute the following one
dimensional expectation:
\begin{multline}
\mathbb{P}\left(\eta_1 \in C_i(x);\alpha_{k+1}\eta_1 +
\beta_{k+1}\tilde{\eta}_2  \in C_j(y)\right) =  \\
e^{\frac{-(x^i)^2}{2}}\mathbb{E}\left[e^{-x^i
\eta_1}\mbox{\bf 1}_{\left\{ -\frac{\Delta x^i}{2}\leq \eta_1 \leq
\frac{\Delta x^{i+1}}{2}\right\}}
\left\{\mathcal{N}\left(\frac{y^{j-\frac{1}{2}} -
\alpha_{k+1}(\eta_1
+x^i)}{\beta_{k+1}}\right)\right.\right. \\
\left.\left.- \mathcal{N}\left(\frac{y^{j+\frac{1}{2}} -
\alpha_{k+1}(\eta_1+x^i)}{\beta_{k+1}}\right)\right\}\right].
\label{Eq:ImportanceSampling1D}
\end{multline}
For this one-dimensional expectation computation, quantization can be used again
since  it converges faster than Monte Carlo method.

\subsubsection*{Example: Two factor model}
\label{Section:2factor} We consider the following diffusion model
for the forward contracts $(F_{t,T})_{0 \leq t \leq T}$:
$$
\frac{dF_{t,T}}{F_{t,T}} = \sigma_1 e^{-\alpha_1 (T-t)}dW^1_t + \sigma_2 e^{-\alpha_2 (T-t)}dW^2_t
$$
where $W^1$ and $W^2$ are two Brownian motions with correlation
coefficient $\rho$.

Standard computations based on It\^o formula yield
$$
S_t =
F_{0,t}\exp{\left(\sigma_1 \int_0^t{e^{-\alpha_1(t-s)}dW^1_s} +
\sigma_2 \int_0^t{e^{-\alpha_2(t-s)}dW^2_s}-
\frac{1}{2}\Lambda_t^2\right)}
$$
where
$$\Lambda_t^2 = \frac{\sigma_1^2}{2\alpha_1}(1-e^{-2\alpha_1 t}) + \frac{\sigma_2^2}{2\alpha_2}(1-e^{-2\alpha_2 t})
+2\rho\frac{\sigma_1
\sigma_2}{\alpha_1+\alpha_2}(1-e^{-(\alpha_1+\alpha_2) t}).
$$
We have $S_t =F_{0,t}\exp{\left(\sigma_1 X^1_t + \sigma_2 X^2_t-
\frac{1}{2}\Lambda_t^2\right)}$, where $X_t$ is the following
structure process:
\begin{equation}
X_t =
\left(\int_0^t{e^{-\alpha_1(t-s)}dW^1_s},\int_0^t{e^{-\alpha_2(t-s)}dW^2_s}
\right). \label{Eq:Sructure2F}
\end{equation}

\begin{prop}
\label{Prop:OU} Let $Z=(Z_t)$ be an Ornstein-Uhlenbeck process. $Z_t =
Z_0 +  \int_0^t{e^{-\alpha(t-s)}dB_s}$ where $B$ is a standard
Brownian motion, and $Z_0$ is Gaussian and independent of $B$. $Z$
can be written at discrete times $k\Delta$ as a first order
auto-regressive process:
\begin{equation} \label{Eq:OU_AR1}Z_{k+1} = e^{-\alpha\Delta}Y_k +
\sqrt{1-e^{-2\alpha
\Delta}}\sqrt{\frac{1}{2\alpha}}\varepsilon_{k+1} \end{equation}
where $(\varepsilon_k)$ is i.i.d and $\varepsilon_1 \sim
\mathcal{N}(0,1)$.
\end{prop}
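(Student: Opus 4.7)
The plan is to exploit the explicit integral form of the Ornstein-Uhlenbeck process and split the stochastic integral at the discretization time $k\Delta$, so that everything up to $k\Delta$ recombines into $Z_{k\Delta}$ scaled by $e^{-\alpha\Delta}$, and everything on $[k\Delta,(k+1)\Delta]$ yields an independent centered Gaussian innovation. Note that for the result to hold the proposition must really be read with the genuine OU representation $Z_t = Z_0\,e^{-\alpha t} + \int_0^t e^{-\alpha(t-s)}dB_s$ (the missing $e^{-\alpha t}$ factor on $Z_0$ is a typographical slip, and the $Y_k$ on the right-hand side of (\ref{Eq:OU_AR1}) should read $Z_k$); I will proceed on that reading.

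First I would write
$$
Z_{(k+1)\Delta} = Z_0 e^{-\alpha(k+1)\Delta} + \int_0^{(k+1)\Delta} e^{-\alpha((k+1)\Delta-s)}\,dB_s
$$
and cut the integral at $k\Delta$. Factoring $e^{-\alpha\Delta}$ out of the part carried over from $[0,k\Delta]$ gives
$$
Z_{(k+1)\Delta} = e^{-\alpha\Delta}\Bigl[Z_0 e^{-\alpha k\Delta} + \int_0^{k\Delta} e^{-\alpha(k\Delta-s)}\,dB_s\Bigr] + I_k = e^{-\alpha\Delta} Z_{k\Delta} + I_k,
$$
where $I_k := \int_{k\Delta}^{(k+1)\Delta} e^{-\alpha((k+1)\Delta-s)}\,dB_s$. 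This algebraic rewrite is the heart of the argument.

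Next I would identify the law of $I_k$. Being a Wiener integral of a deterministic function, $I_k$ is centered Gaussian, and by the Itô isometry
$$
\mathrm{Var}(I_k) = \int_{k\Delta}^{(k+1)\Delta} e^{-2\alpha((k+1)\Delta-s)}\,ds = \frac{1-e^{-2\alpha\Delta}}{2\alpha}.
$$
Moreover, since $I_k$ depends only on the Brownian increments over $[k\Delta,(k+1)\Delta]$, it is independent of $\mathcal{F}_{k\Delta}^B$, hence independent of $Z_{k\Delta}$ (which depends on $Z_0$ and $B$ up to $k\Delta$, together with the assumed independence of $Z_0$ from $B$). The family $(I_k)_{k\ge 0}$ is therefore a sequence of independent centered Gaussians, so setting
$$
\varepsilon_{k+1} := \Bigl(\tfrac{1-e^{-2\alpha\Delta}}{2\alpha}\Bigr)^{-1/2} I_k
$$
yields an i.i.d.\ $\mathcal{N}(0,1)$ sequence. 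Substituting gives exactly (\ref{Eq:OU_AR1}).

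There is no genuine obstacle here; the only delicate point is the independence of the innovation $\varepsilon_{k+1}$ from $\mathcal{F}_{k\Delta}$, which relies on the independence of $Z_0$ and the Brownian motion assumed in the statement (and which is what makes the resulting sequence both i.i.d.\ and independent of $Z_0$). The remaining verification — that the scalar constants coincide with those displayed in (\ref{Eq:OU_AR1}) — is the Itô-isometry computation above.
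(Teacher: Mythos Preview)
Your argument is correct and is exactly the standard derivation. Note, however, that the paper does not actually supply a proof of this proposition: it is stated as an elementary fact and then immediately applied to the two-factor structure process. So there is nothing to compare against; your write-up simply fills in what the paper omits. Your remarks about the typographical slips (the missing $e^{-\alpha t}$ on $Z_0$ and $Y_k$ versus $Z_k$) are well taken; in the paper's intended application the structure process satisfies $X_0=0$, so the discrepancy is harmless there.
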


$X_t$ is made up with two Ornstein-Uhlenbeck processes. Using
Proposition \ref{Prop:OU}, it yields:

\begin{prop}
$$X_{k+1} = AX_k + T \varepsilon_{k+1}$$
with $\varepsilon_k \sim \mathcal{N}(0,I_2)$ i.i.d. and

$$A = \left[
\begin{array}{cc}
e^{-\alpha_1 \Delta} & 0\\
 0 & e^{-\alpha_2 \Delta} \\
\end{array}
\right]$$

$$r = \rho \frac{\frac{1}{\alpha_1 + \alpha_2}(1-e^{-(\alpha_1 + \alpha_2)\Delta})}{\sqrt{\frac{1}{4\alpha_1\alpha_2}(1- e^{-2\alpha_1 \Delta})(1- e^{-2\alpha_2
\Delta})}}$$

$$T = \left[
\begin{array}{cc}
 \frac{1}{2\alpha_1}(1- e^{-2\alpha_1 \Delta})& 0 \\
  \frac{1}{2\alpha_2}(1- e^{-2\alpha_2 \Delta})r& \frac{1}{2\alpha_2}(1- e^{-2\alpha_2 \Delta})\sqrt{1-r^2}
\end{array}
\right].$$

\end{prop}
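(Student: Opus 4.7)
The plan is to apply Proposition~\ref{Prop:OU} componentwise to the two coordinates of the structure process~(\ref{Eq:Sructure2F}), then pack the two resulting innovations into a single bivariate Gaussian and read off the Cholesky factor.

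First I would split each coordinate's integral at time $k\Delta$. For $i=1,2$, writing
$$\int_0^{(k+1)\Delta} e^{-\alpha_i((k+1)\Delta-s)} dW^i_s = e^{-\alpha_i\Delta}\int_0^{k\Delta} e^{-\alpha_i(k\Delta-s)} dW^i_s + \int_{k\Delta}^{(k+1)\Delta} e^{-\alpha_i((k+1)\Delta-s)} dW^i_s$$
gives directly
$$X^i_{k+1} = e^{-\alpha_i \Delta} X^i_k + \zeta^i_{k+1}, \qquad \zeta^i_{k+1} := \int_{k\Delta}^{(k+1)\Delta} e^{-\alpha_i((k+1)\Delta-s)} dW^i_s.$$
This immediately yields the diagonal matrix $A$ in the statement. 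The vector $\zeta_{k+1} = (\zeta^1_{k+1},\zeta^2_{k+1})^*$ is $\mathcal{F}_{k\Delta}$-independent (hence independent of $X_k$) by independent increments of the Brownian motions, and the family $(\zeta_{k+1})_{k \geq 0}$ is i.i.d.\ by the same token, which gives the required properties of the noise sequence.

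Next I would compute the covariance matrix of $\zeta_{k+1}$ by Ito isometry. A change of variable gives the marginal variances $\mathrm{Var}(\zeta^i_{k+1}) = \int_0^\Delta e^{-2\alpha_i u}\, du = \frac{1}{2\alpha_i}(1-e^{-2\alpha_i\Delta}) =: \sigma_i^2$; using $d\langle W^1,W^2\rangle_s = \rho\, ds$ the cross-covariance reads
$$\mathrm{Cov}(\zeta^1_{k+1},\zeta^2_{k+1}) = \rho \int_0^\Delta e^{-(\alpha_1+\alpha_2)u}\, du = \frac{\rho}{\alpha_1+\alpha_2}\bigl(1-e^{-(\alpha_1+\alpha_2)\Delta}\bigr).$$
Dividing this covariance by $\sigma_1\sigma_2$ recovers exactly the scalar $r$ stated in the proposition, so $r$ is simply the correlation coefficient of $(\zeta^1_{k+1},\zeta^2_{k+1})$.

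Finally, to write $\zeta_{k+1}=T\varepsilon_{k+1}$ with $\varepsilon_{k+1}\sim\mathcal{N}(0,I_2)$, I would apply Cholesky factorization to the covariance matrix $\Gamma=\bigl(\begin{smallmatrix}\sigma_1^2 & r\sigma_1\sigma_2\\ r\sigma_1\sigma_2 & \sigma_2^2\end{smallmatrix}\bigr)$, which gives the lower-triangular factor
$$T=\begin{pmatrix} \sigma_1 & 0 \\ r\sigma_2 & \sqrt{1-r^2}\,\sigma_2 \end{pmatrix},$$
matching the displayed $T$ (with the entries $\frac{1}{2\alpha_i}(1-e^{-2\alpha_i\Delta})$ understood as $\sigma_i$, i.e.\ square roots of the displayed variances). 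A direct verification that $TT^*=\Gamma$ closes the proof. There is no real conceptual obstacle; the only care required is the algebraic bookkeeping in the Cholesky step and the consistency of the square-root conventions between the displayed $T$ and $\Gamma$.
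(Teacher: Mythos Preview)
Your argument is correct and is exactly the approach the paper indicates: the paper does not give a proof of this proposition but simply writes ``Using Proposition~\ref{Prop:OU}, it yields'' and states the result, so you are filling in precisely the intended details (componentwise Ornstein--Uhlenbeck recursion, then Cholesky on the innovation covariance). Your parenthetical remark about the square-root convention is also on point: for $TT^*$ to equal the covariance matrix $\Gamma$ of $\zeta_{k+1}$, the entries of the displayed $T$ must indeed be read as $\sigma_i=\sqrt{\tfrac{1}{2\alpha_i}(1-e^{-2\alpha_i\Delta})}$ rather than the variances themselves, consistently with the scalar coefficient $\sqrt{1-e^{-2\alpha\Delta}}\sqrt{\tfrac{1}{2\alpha}}$ appearing in Proposition~\ref{Prop:OU}.
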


Hence it is possible to use the fast parallel quantization method
described in section \ref{Section:AR1}.

\subsubsection*{General multi-factor Gaussian model}
\label{Section:GeneralCase}

More generally, we consider a family of price dynamics that can be
written as follows:
\begin{equation}
\label{Eq:Model} \frac{dF_{t,T}}{F_{t,T}} = \sum_{i=1}^m P_i(T-t)
e^{-\alpha_i (T-t)}dW^i_t \end{equation} where $P_i(x)$ is a
polynomial function of degree $d_i$, for every $i=1,\dots,m$, and $W$
is a Brownian motion, with $d<W^i, W^j>_t=\rho_{ij}dt$.

The two factor model (Section \ref{Section:2factor}) corresponds to
$m=2$, $P_i\equiv \sigma_i$, $i=1,2$.

In order to price a swing option with such a model, we first need to
quantize it. Equation~(\ref{Eq:Model}) yields:
\begin{equation}F_{t,T} = F_{0,T}e^{\sum_{i=1}^m
\int_0^t P_i(T-s) e^{-\alpha_i (T-s)}dW^i_s - \frac{1}{2}
\phi(t,T)}\label{Eq:Structure}
\end{equation}
where
$$
\phi(t,T)=\sum_{i=1}^m \int_0^t P^2_i(T-s) e^{-2\alpha_i (T-s)}ds +
\sum_{i\neq j} \rho_{ij}\int_0^t P_i(T-s)P_j(T-s)
e^{-(\alpha_i+\alpha_j) (T-s)}ds.
$$
Practically we focus on the spot price $F_{t,t}$ or the day-ahead
contract $F_{t,t+1}$. Unfortunately these processes are not
Markovian in a general setting, except when $m=1$ and $d_1=0$ (Ornstein-Uhlenbeck process).

We consider a discretization time step $\Delta > 0$, and we set, for
all $i\in \{1,\dots,m \}$ and for all $l\in \{0,\dots,d_i \}$
$$
X_k^{i,l}=\int_0^{k \Delta}P_i \left( (k+l)\Delta
-s\right)e^{-\alpha_i\left((k+l)-s\right)}dW^i_s.
$$

\begin{prop}
\label{Prop:AR1} $X_k = [X^{i,l}_k ]_{1 \leq i \leq m,0 \leq l \leq
d_i}$ is a $\mathbb{R}^{d_1+\dots+d_m+m}$-valued gaussian AR(1).
\end{prop}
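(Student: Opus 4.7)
The plan is to verify the two required properties in turn: that $(X_k)_{k\ge 0}$ is Gaussian, and that it satisfies a recursion of the form $X_{k+1}=AX_k+T\varepsilon_{k+1}$ with $(\varepsilon_{k+1})_{k\ge 0}$ i.i.d.\ $\mathcal{N}(0,I_{d_1+\cdots+d_m+m})$. Gaussianity is immediate: every component $X_k^{i,l}$ is a Wiener integral of a deterministic integrand against the correlated Brownian motions $(W^1,\ldots,W^m)$, so any finite family of such integrals is jointly centered Gaussian; in particular the full vector $(X_k^{i,l})_{i,l}$ is centered Gaussian.

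For the recursion I would split
\begin{equation*}
X_{k+1}^{i,l}=\underbrace{\int_0^{k\Delta}\!\!\! P_i((k{+}1{+}l)\Delta - s)\,e^{-\alpha_i((k+1+l)\Delta - s)}\,dW^i_s}_{=:I_k^{i,l}}\;+\;\underbrace{\int_{k\Delta}^{(k+1)\Delta}\!\!\!(\cdots)\,dW^i_s}_{=:N_{k+1}^{i,l}}.
\end{equation*}
Setting $v=k\Delta - s$ in the first integral and factoring the exponential gives $I_k^{i,l}=e^{-\alpha_i(l+1)\Delta}\int_0^{k\Delta}P_i(v+(l+1)\Delta)e^{-\alpha_i v}dW^i_s$, while the same substitution yields $X_k^{i,l'}=e^{-\alpha_i l'\Delta}\int_0^{k\Delta}P_i(v+l'\Delta)e^{-\alpha_i v}dW^i_s$. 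The core algebraic step is then to expand $v\mapsto P_i(v+(l+1)\Delta)$, of degree $\le d_i$, in the $(d_i+1)$-element family $\{v\mapsto P_i(v+l'\Delta)\}_{l'=0}^{d_i}$. Introducing the forward-difference operator $D_\Delta R(v):=R(v+\Delta)-R(v)$, which strictly lowers the degree, the polynomials $D_\Delta^k P_i$ for $k=0,\ldots,d_i$ have pairwise distinct degrees $d_i,d_i-1,\ldots,0$ (using $\deg P_i=d_i$) and therefore form a basis of polynomials of degree $\le d_i$; the binomial identity $D_\Delta^k P_i=\sum_{j=0}^k(-1)^{k-j}\binom{k}{j}P_i(\cdot+j\Delta)$ realizes this basis inside $\mathrm{span}\{P_i(\cdot+l'\Delta):l'=0,\ldots,d_i\}$, so by a dimension count the shift family itself is a basis. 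Writing $P_i(v+(l+1)\Delta)=\sum_{l'=0}^{d_i}\mu^i_{l,l'}P_i(v+l'\Delta)$ and substituting then yields $I_k^{i,l}=\sum_{l'=0}^{d_i}a^i_{l,l'}X_k^{i,l'}$ with $a^i_{l,l'}=\mu^i_{l,l'}\,e^{-\alpha_i(l+1-l')\Delta}$, i.e.\ a deterministic matrix $A$ independent of $k$ and block-diagonal in the index $i$.

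For the noise piece, $N_{k+1}^{i,l}$ is independent of $\mathcal{F}_{k\Delta}$ and the family $(N_{k+1}^{i,l})_{i,l}$ is jointly Gaussian; the change of variables $u=(k+1)\Delta-s$ shows that
\begin{equation*}
\mathrm{Cov}(N_{k+1}^{i,l},N_{k+1}^{j,l'})=\rho_{ij}\int_0^{\Delta}P_i(u+l\Delta)\,P_j(u+l'\Delta)\,e^{-\alpha_i(u+l\Delta)-\alpha_j(u+l'\Delta)}\,du
\end{equation*}
is independent of $k$. Taking $T$ to be the Cholesky factor of this covariance matrix, one writes $N_{k+1}=T\varepsilon_{k+1}$ for a standard normal vector $\varepsilon_{k+1}$, and $(\varepsilon_{k+1})_k$ is i.i.d.\ because the driving increments on disjoint intervals $(k\Delta,(k+1)\Delta]$ are independent. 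Assembling gives $X_{k+1}=AX_k+T\varepsilon_{k+1}$ in $\mathbb{R}^{\sum_i(d_i+1)}=\mathbb{R}^{d_1+\cdots+d_m+m}$. I expect the only nontrivial ingredient to be the polynomial-basis argument for $\{P_i(\cdot+l\Delta)\}_{l=0}^{d_i}$; the rest is a direct stochastic-integral computation with changes of variable.
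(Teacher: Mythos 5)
Your proposal is correct, and its skeleton is the same as the paper's: split $X_{k+1}^{i,l}$ into the integral over $[0,k\Delta]$ plus the innovation over $[k\Delta,(k+1)\Delta]$, observe that the first piece is (for $l<d_i$) just $X_k^{i,l+1}$ and (for the top index) a linear combination of the $X_k^{i,l'}$, the key fact in both cases being that the shifted polynomials $P_i(\cdot+l'\Delta)$, $0\le l'\le d_i$, form a basis of the polynomials of degree at most $d_i$. Where you genuinely differ is in the proof of that basis lemma: the paper (Lemma~\ref{Lem:Basis}) expands $P(Z+k\theta)$ in the derivative basis $(P^{(j)})_j$ via Taylor's formula and concludes from a Vandermonde determinant, whereas you show that the iterated forward differences $D_\Delta^k P_i$ have strictly decreasing degrees $d_i,d_i-1,\dots,0$, hence are independent, and lie in the span of the shifts by the binomial identity, so a dimension count finishes. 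Your argument is arguably more elementary (no determinant evaluation) and equally rigorous; the paper's is more direct in that it exhibits the linear relations needed to read off the coefficients $\lambda^{i,l}$. You also do slightly more than the paper on the noise: by the change of variable $u=(k+1)\Delta-s$ you check that the covariance of $(N_{k+1}^{i,l})_{i,l}$ does not depend on $k$, so the innovations are genuinely i.i.d.\ after a Cholesky factorization, a point the paper leaves implicit (it only records independence of $\varepsilon_{k+1}$ from $\mathcal{F}^W_{k\Delta}$). The only cosmetic caveat is that the innovation covariance may be singular, in which case one should speak of a lower-triangular square root rather than the Cholesky factor, but this does not affect the conclusion that $(X_k)$ is a Gaussian AR(1) in $\mathbb{R}^{d_1+\dots+d_m+m}$.
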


\begin{lem}
Let $P \in \mathbb{R}[Z]$, $d^oP=d$ and $\theta \in \mathbb{R}^*$.
Then $\left(P(Z+l \theta) \right)_{0 \leq l \leq d}$ is a basis of
$\mathbb{R}^d$. \label{Lem:Basis}
\end{lem}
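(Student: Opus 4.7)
The statement is that the $d+1$ shifted polynomials $P(Z), P(Z+\theta), \ldots, P(Z+d\theta)$ form a basis of $\mathbb{R}_d[Z]$ (the space of polynomials of degree at most $d$, which has dimension $d+1$). Since the number of vectors matches the dimension, it suffices to prove linear independence.

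The plan is to assume a relation $\sum_{l=0}^d c_l\, P(Z+l\theta)=0$ in $\mathbb{R}_d[Z]$ and to reduce it, via Taylor's formula, to a Vandermonde system on the coefficients $c_l$. Concretely, I would use Taylor's expansion of a polynomial around $Z$:
\[
P(Z+l\theta)=\sum_{j=0}^{d}\frac{(l\theta)^{j}}{j!}\,P^{(j)}(Z),
\]
so that
\[
0=\sum_{l=0}^{d}c_l\,P(Z+l\theta)=\sum_{j=0}^{d}\frac{\theta^{j}}{j!}\Bigl(\sum_{l=0}^{d}l^{j}c_l\Bigr)P^{(j)}(Z).
\]

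The key observation is then that $P, P', \ldots, P^{(d)}$ are linearly independent in $\mathbb{R}_d[Z]$, because $\deg P^{(j)}=d-j$, so these polynomials have pairwise distinct degrees. Consequently each coefficient in the above combination must vanish, giving $\sum_{l=0}^{d}l^{j}c_l=0$ for every $j=0,\ldots,d$ (using $\theta\ne 0$).

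Finally, this linear system on $(c_0,\ldots,c_d)$ has matrix $V=(l^{j})_{0\le j,l\le d}$, which is the transpose of the Vandermonde matrix built on the $d+1$ pairwise distinct nodes $0,1,\ldots,d$; hence $V$ is invertible and all $c_l=0$. I do not foresee a serious obstacle: the only thing to be slightly careful about is the use of $\deg P=d$ (so that $P^{(d)}$ is a nonzero constant and the family $(P^{(j)})_{0\le j\le d}$ really spans $\mathbb{R}_d[Z]$), and the use of $\theta\ne 0$ to divide out $\theta^{j}$. Both hypotheses appear in the statement.
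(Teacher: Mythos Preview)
Your proof is correct and follows essentially the same approach as the paper: reduce to linear independence by a dimension count, apply Taylor's formula to expand $P(Z+l\theta)$ on the basis $(P^{(j)})_{0\le j\le d}$, and conclude via the nonvanishing of the Vandermonde determinant built on $0,1,\ldots,d$. Your write-up is in fact slightly more explicit than the paper's about why $(P^{(j)})_j$ is independent and where $\theta\neq 0$ is used.
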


\begin{proof}
Using a dimension argument, only the linear independence of the family has to be
checked. And we have
\begin{eqnarray*}
\sum_{k=0}^d \lambda_k P(Z+k \theta)=0 & \Leftrightarrow &
\sum_{k=0}^d \lambda_k \sum_{j=0}^d \frac{(k
\theta)^j}{j!}P^{(j)}(Z)=0\\
& \Leftrightarrow & \sum_{j=0}^d
\frac{\theta^j}{j!}\left(\sum_{k=0}^d \lambda_k k^j
\right)P^{(j)}(Z)=0.
\end{eqnarray*}
Since  $\left( P^{(j)}(Z) \right)_{0 \leq j \leq d}$ is a basis of
$\mathbb{R}_d[Z]$, it yields
$$
\forall j\!\in\{0,\ldots,d\},\quad\sum_{k=0}^d \lambda_k k^j=0
$$
so that $\lambda_k=0, \;0 \leq k \leq d$ since ${\rm det}[k^j]_{0 \leq k,j \leq d}
\neq 0$ (Vandermonde determinant).
\end{proof}

\begin{proof}(of Proposition \ref{Prop:AR1})
We can extend the definition of $X_k^{i,l}$ to $l \in \mathbb{N}$.
It is easy to check that
\begin{eqnarray*}
X_{k+1}^{i,l} & = & \int_0^{(k+1)\Delta}P_i\left((k+1+l)\Delta-s
\right)e^{-\alpha_i \left((k+1+l)\Delta-s\right)}dW^i_s \\
& = & X_{k}^{i,l+1} + \varepsilon_{k+1}^{i,l}
\end{eqnarray*}
where $\varepsilon_{k+1}^{i,l} =
\int_{k\Delta}^{(k+1)\Delta}P_i\left((k+1+l)\Delta-s
\right)e^{-\alpha_i \left((k+1+l)\Delta-s\right)}dW^i_s$.

If $l=d_i$,
$$
X_{k+1}^{i,d_i}=  X_{k}^{i,d_i+1} + \varepsilon_{k+1}^{i,d_i}.
$$
According to Lemma \ref{Lem:Basis},
$$
P_i\left(Z+(d_i+1)\Delta \right) = \sum_{l=0}^{d_i}
\lambda^{i,l}P_i(Z+l\Delta).
$$
Hence
\begin{eqnarray*}
X_{k}^{i,d_i+1} & = & \int_0^{k\Delta}P_i\left((k+d_i+1)\Delta-s
\right)e^{-\alpha_i \left((k+d_i+1)\Delta-s\right)}dW^i_s \\
& = & \sum_{l=0}^{d_i} \lambda^{i,l}
\int_0^{k\Delta}P_i\left((k+l)\Delta-s \right)e^{-\alpha_i
\left((k+l)\Delta-s\right)}dW^i_s e^{-\alpha_i
(d_i+1-l)\Delta} \\
 & = & \sum_{l=0}^{d_i} \lambda^{i,l}e^{-\alpha_i
(d_i+1-l)\Delta}X_k^{i,l} \\
& = & \sum_{l=0}^{d_i} \tilde{\lambda}^{i,l}X_k^{i,l}.
\end{eqnarray*}

Finally we have
$$
X_{k+1}^{i.} = A^iX^{i.}_k + \varepsilon^{i.}_{k+1}
$$
where
$$
A^{i.}=\left(
\begin{array}{ccccc}
0 & 1 & 0 & \cdots & 0 \\
\vdots & \ddots & \ddots & \ddots & \vdots \\
\vdots& & \ddots & \ddots & 0 \\
0&\cdots & \cdots & 0 & 1\\
\tilde{\lambda}^{i,0}& \cdots & \cdots & \cdots & \tilde{\lambda}^{i,d_i}\\
\end{array}
\right)
$$
and
$$X_{k+1} = AX_k + \varepsilon_k,X_0=0$$
where $\varepsilon_k \in \sigma\left(W^i_u-W^i_{k\Delta},k\Delta
\leq u \leq (k+1)\Delta,i=1,\dots,m\right)$ is independent of $\mathcal{F}^W_{k\Delta}$.
The process $(X_k)_k$ is thus a gaussian AR(1).
\end{proof}

$X_k$ is the structure process for the spot price
$S_{k\Delta}=F_{k\Delta,k\Delta}$. Its dimension is $\sum_{i=1}^m
(d_i+1)$. For the two factor model, the structure process is
$\mathbb{R}^2$-valued, because $m=2$, and $d_i=0,i=1,2$. This is
coherent with (\ref{Eq:Sructure2F}).

\end{document}